
\documentclass[preprint,12pt]{elsarticle}



\usepackage{graphicx}
\usepackage{subfigure}

\usepackage[utf8]{inputenc}
\usepackage{amsmath}
\usepackage{amsthm}
\usepackage{amsfonts}
\usepackage{epsfig}
\usepackage{psfrag}
\usepackage{pstricks}
\usepackage{algorithm}

\usepackage{amssymb}
\usepackage{bm}




\def \div {\mathrm{div}}
\usepackage{fancyhdr}
\usepackage{color}

\definecolor{turquoise}{cmyk}{0.65,0,0.1,0.1}
\definecolor{purple}{rgb}{0.65,0,0.65}
\definecolor{dark_green}{rgb}{0, 0.5, 0}
\definecolor{orange}{rgb}{0.8, 0.2, 0.2}

\newtheorem{prop}{Proposition}
\newtheorem{lem}{Lemma}


\newcommand{\RevisionBlack}[1]{{\color{black}#1}}

\biboptions{comma,compress}

\pagestyle{fancy}
\fancyhf{}
\chead{ Computer-Aided Design $\times \times \times$ (2014) $\times \times - \times \times$}
\fancyfoot[CE,CO]{\thepage}

\journal{COMPUTER-AIDED DESIGN}

\begin{document}

\begin{frontmatter}



\title{Iso-level tool path planning for free-form surfaces}


\author[ustc,sia]{Qiang Zou}
\ead{john.qiangzou@gmail.com}

\author[ustc]{Juyong Zhang \corref{cor2}}
\ead{juyong@ustc.edu.cn}

\author[epfl]{Bailin Deng}
\ead{bailin.deng@epfl.ch}

\author[sia]{Jibin Zhao}
\ead{jbzhao@sia.cn}

\cortext[cor2]{Corresponding author. Tel.: +86-551-63600673.}

\address[ustc]{School of Mathematical Sciences, University of Science and Technology of China, Anhui, 230026, China;}
\address[epfl]{Computer Graphics and Geometry Laboratory, \'Ecole Polytechnique F\'ed\'erale de Lausanne, Lausanne, CH-1015, Switzerland;}
\address[sia]{Shenyang Institute of Automation, Chinese Academy of Sciences, Liaoning, 110016, China.}


\begin{abstract}
The aim of tool path planning is to maximize the efficiency against some given precision criteria. In practice, scallop height should be kept constant to avoid unnecessary cutting, while the tool path should be smooth enough to maintain a high feed rate. However, iso-scallop and smoothness often conflict with each other. Existing methods smooth iso-scallop paths one-by-one, which makes the final tool path far from being globally optimal. This paper proposes a new framework for tool path optimization. It views a family of iso-level curves of a scalar function defined over the surface as tool path so that desired tool path can be generated by finding the function that minimizes certain energy functional and different objectives can be considered simultaneously. We use the framework to plan globally optimal tool path with respect to iso-scallop and smoothness. The energy functionals for planning iso-scallop, smoothness, and optimal tool path are respectively derived, and the path topology is studied too. Experimental results are given to show effectiveness of the proposed methods.

\end{abstract}

\begin{keyword}
iso-level tool path \sep globally optimal \sep PDE \sep iso-scallop \sep smooth

\end{keyword}

\end{frontmatter}


\section{Introduction}
\label{sec:1}
The terminology ``tool path'' refers to a specified trajectory along which machine tools move their ends (i.e., cutter and table) to form desired surfaces. The automatic generation of such trajectories are of central importance in modern CAD/CAM systems. There are two fundamental criteria, i.e., precision and efficiency, for automatic tool path generation. Precision means the error of approximating a surface with a family of curves, and approximating a curve with a family of segments or arcs. Efficiency concerns the time of machining along the tool path. The aim of tool path planning is to maximize the efficiency under the given precision criteria. In this paper, we propose a method, which can take these two criteria into consideration together, to generate globally optimal tool paths.

\subsection{Related works}
\label{sec:1.1}
For a given precision tolerance (i.e., the scallop height and chord deviation), the tool path is always supposed to be as smooth and short as possible. In this paper, the smoothness of tool path is measured by its curvature in the 3D space. If the tool path is smooth enough, there is less repeated acceleration/deceleration, which makes it possible to maintain a high feed rate. Meanwhile, the shorter the tool path is, the less time the machining takes. Theoretically, tool paths following the direction of maximum machining strip width are the shortest in total length, since they maximize material removal. But such strategy often leads to irregular tool paths which are neither direction/contour parallel nor spiral, as shown in \cite{kim2002toolpath, kumazawa2012generating}. Therefore, in practice, a weaker condition that the tool path has no unnecessary (also called redundant) cutting is adopted. To achieve this, the scallop height should be kept constant along the path. Hence, tool paths with iso-scallop and smooth properties are preferable.

Last decade has seen a great deal of literature on tool path planning for free-form surfaces, such as iso-parametric method \cite{loney1987nc, yuwen2006iso, Qiang2013iso}, iso-planar method \cite{huang1994non, ding2003adaptive, feng2005iso}, iso-scallop method \cite{suresh1994constant, koren1996efficient, sarma1997geometry, feng2002constant, yoon2005fast, kim2007constant, li2012power}, iso-phote method \cite{han1999iso} and C-space method \cite{choi2001c}, to name a few. Surveys of much more work about tool path planning research can be found in \cite{dragomatz1997classified, lasemi2010recent}. Since we aim at optimal tool paths with respect to iso-scallop and smoothness, we put special interest in the iso-scallop method, which means the height of the points at the scallop curves remains as high as a given value so that the tool path has no unnecessary cutting. Conventionally, constant scallop height is obtained by varying the offset magnitude along each path. A mathematical method for generating iso-scallop tool paths following such strategy was first proposed by Suresh et al. \cite{suresh1994constant}. Afterwards, methods to improve the computing efficiency \cite{koren1996efficient, yoon2005fast} and accuracy \cite{sarma1997geometry, feng2002constant, li2012power} were proposed. In 2007, Kim \cite{kim2007constant} reformulated the iso-scallop tool path as geodesic parallel curves on the design surface by defining a new Riemannian metric.

Despite the non-redundance property, tool paths of constant scallop height tend to have sharp corners, as illustrated in Fig.~\ref{fig:iso-scallop-toolpath}, which implies that smoothness and iso-scallop requirements often conflict with each other. And the tradeoff between them is a major concern in tool path planning. A widely adopted solution to thid is postprocessing: first a new path with constant scallop is generated by varying the offset magnitude along current path, then it is smoothed by replacing its corners with circular arcs \cite{pateloup2004corner, pateloup2010bspline}. An alternative is to employ the level set method to offset the paths while keeping them smooth \cite{dhanik2010contour}. Similar to the image segmentation method proposed by Paragios et al. \cite{paragios2002geodesic}, a curvature term can be added into the evolution equation so that points of higher curvature are offset less while those of lower curvature are offset more. However, on one hand, for a path subject to desired precision, the modification would introduce error. On the other hand, if the path is offset less than the desired tolerance to avoid such error, then hardly can we choose a proper offset magnitude since we usually do not have an overall picture of the tool path. For example, in \cite{dhanik2010contour}, because a curvature item is introduced into the normal velocity, it is still unknown how to choose a proper evolution step that determines the distance between neighboring paths. In fact, such local modification is in general unable to gain a globally optimal tool path, since it cannot take the ungenerated paths into account when operating on (or optimizing) one path. All previous offset based methods generate tool paths one-by-one, and thus inherit the drawback of non-optimality.

There also exist some efforts to generate smooth tool paths without considering the overlapping between neighbor machining strips (i.e., the iso-scallop condition). Generally, such methods are based on the Laplacian. For example, Bieterman and Sandstrom \cite{bieterman2003curvilinear} proposed a Laplacian based contour parallel tool path generation method by selecting the level sets of a harmonic function defined over a pocket as the tool path. But how to choose the level sets for it still remains an open problem, namely there is no formula for path interval calculation so far. Similarly, Chuang and Yang \cite{chuang2007laplace} combined the Laplacian method and iso-parametric method to generate tool paths for pockets with complex topology, i.e., complex boundaries and islands. However, the smoothness of the tool path cannot be guaranteed through Laplacian energy as small Laplacian value does not necessarily mean small curvature of the level set curves. And solving a Laplace equation over a surface can only generate a unique and uncontrollable scalar function (scaling has no impact on the shape of tool paths).  Another drawback of the Laplacian based approach is the severe overlapping between machining strips of neighbor paths, especially for paths near the boundary, which results in too much redundant machining.


\subsection{Our approach}
\label{sec:1.2}
In this paper, we aim to plan optimal tool path regarding iso-scallop and smoothness. We propose a framework that is able to obtain a globally optimal tool path by considering several objectives together. The tool path is represented as a family of level set curves from a scalar function defined over the surface, and our method computes an optimal scalar function by solving a single optimization problem, instead of generating the curves one-by-one. We refer to the level sets as \emph{iso-level curves}, and the proposed tool path planning method as \emph{iso-level method}, in order to be consistent with other terminologies in the literature such as iso-parametric, iso-planar, iso-scallop and iso-phote.

As the tool path is represented by the iso-level curves of some optimized scalar function, desired properties of the tool path are encoded into the properties of the scalar function. In this work, we give the details of how to control the scalar function so that the desired tool path, e.g., iso-scallop tool path, can be generated. We first propose an iso-scallop condition for the target function, which shapes two neighboring iso-level curves to be iso-scallop. Then we propose a smoothness objective. Finally we combine them together to form the objective energy functional so that its minimizer corresponds to an optimal tool path with respect to iso-scallop and smoothness. To the best of our knowledge, this paper is the first work where these formulas are given, through which interval between iso-level curves and their smoothness can be controlled globally. The minimizer of the iso-scallop objective can not only be exploited to plan tool path of constant scallop, but also has an interesting machining meaning, namely, the level increment of two neighbor iso-level curves equals to the square root of scallop height. In addition, the optimal scalar function can be reused to generated tool path of different scallop height tolerances.

Compared with existing tool path generation methods, the proposed method solves the tool path planing problem in a global optimization manner. Besides, the proposed iso-level tool path planning method can free us from the tedious post-processing step for self-intersection and disjunction, which will be demonstrated in more details in Section \ref{sec:2.4}. In addition, since the scalar function is defined all over the surface, the model is completely covered by the iso-level curves, i.e., there are no regions that are not machined, as opposed to the offset based methods (illustrated in Fig. \ref{fig:iso-scallop-toolpath}). Our optimization framework can also be easily extended to include other objectives, such as tool wear, machine kinematics and dynamics.

\begin{figure}[htbp]
\centering
\hfill
\subfigure[]{
\includegraphics[width=0.37\textwidth]{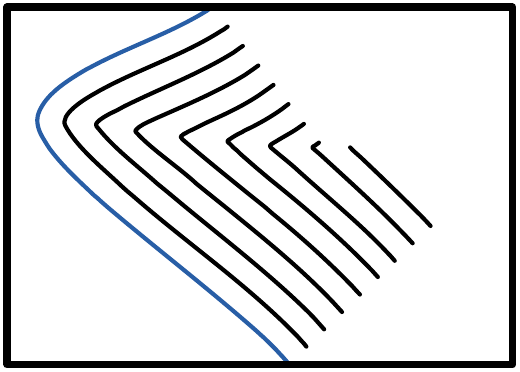}
\label{fig:1a}
}
\hfill
\subfigure[]{
\includegraphics[width=0.36\textwidth]{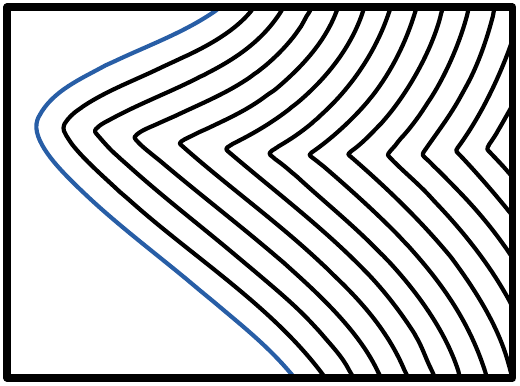}
\label{fig:1b}
}
\hspace*{\fill}
\caption{Iso-scallop tool path. (a) offset based iso-scallop tool path; (b) scale function based iso-scallop tool path.}
\label{fig:iso-scallop-toolpath}
\end{figure}

The remainder of this paper is organized as follows: Section~\ref{sec:2} describes the optimization models for iso-level method, including iso-scallop tool path generation (Section~\ref{sec:2.1}), smooth tool path generation (Section~\ref{sec:2.2}), optimal tool path generation (Section~\ref{sec:2.3}), followed by a discussion on tool path topology (Section~\ref{sec:2.4}). In Section~\ref{sec:3}, we present the numerical solution to the optimization models. Section~\ref{sec:4} summarizes the overall procedures for planning iso-level tool paths. Section~\ref{sec:5} shows the experimental results. Finally, we conclude the whole paper in Section \ref{sec:6}.

\begin{figure*}[htb]
\centering
\hfill
\subfigure[]{
\includegraphics[width=0.45\textwidth, origin=bl]{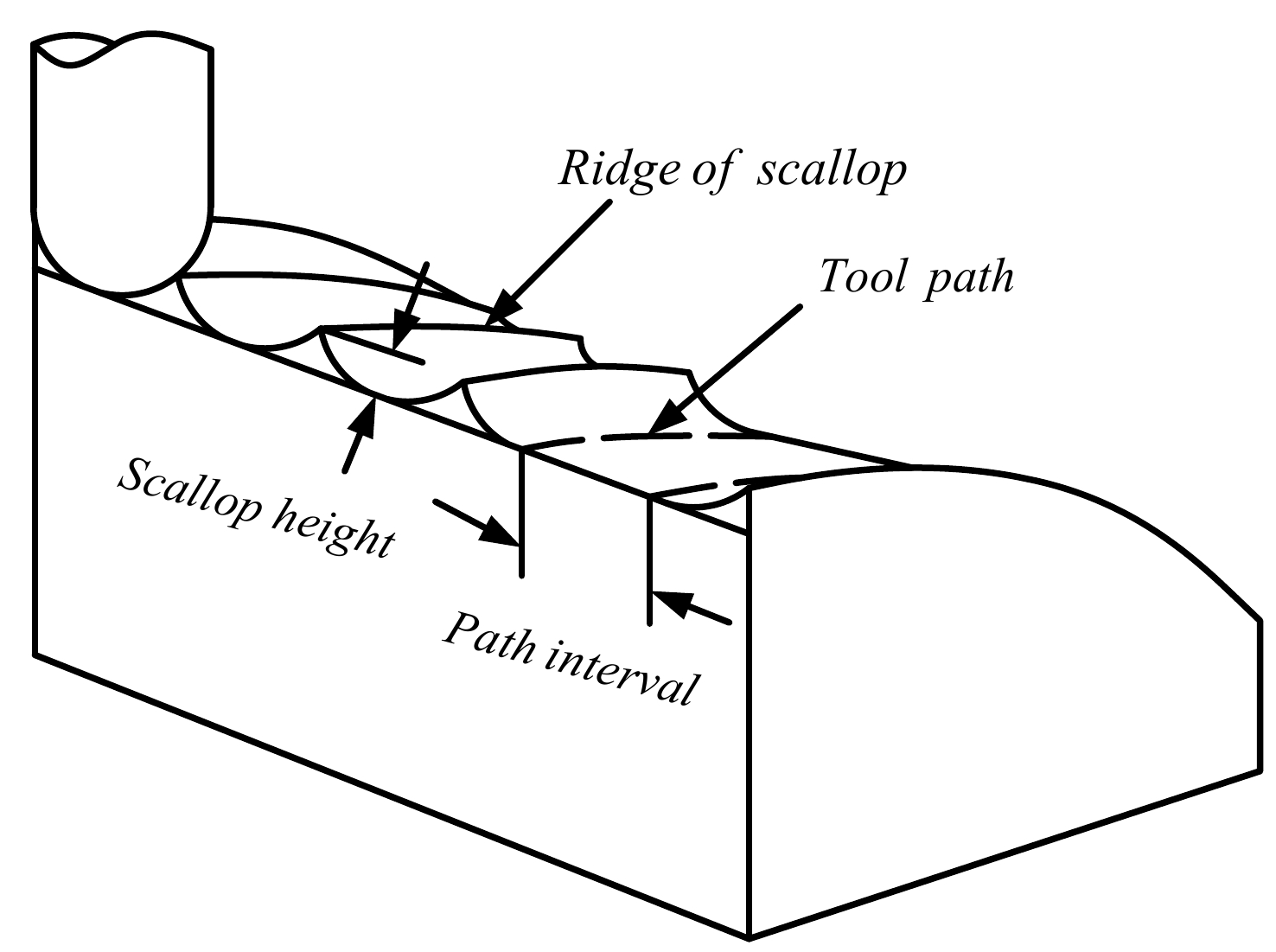}
\label{fig:path-parameters}
}
\hfill
\subfigure[]{
\includegraphics[width=0.45\textwidth]{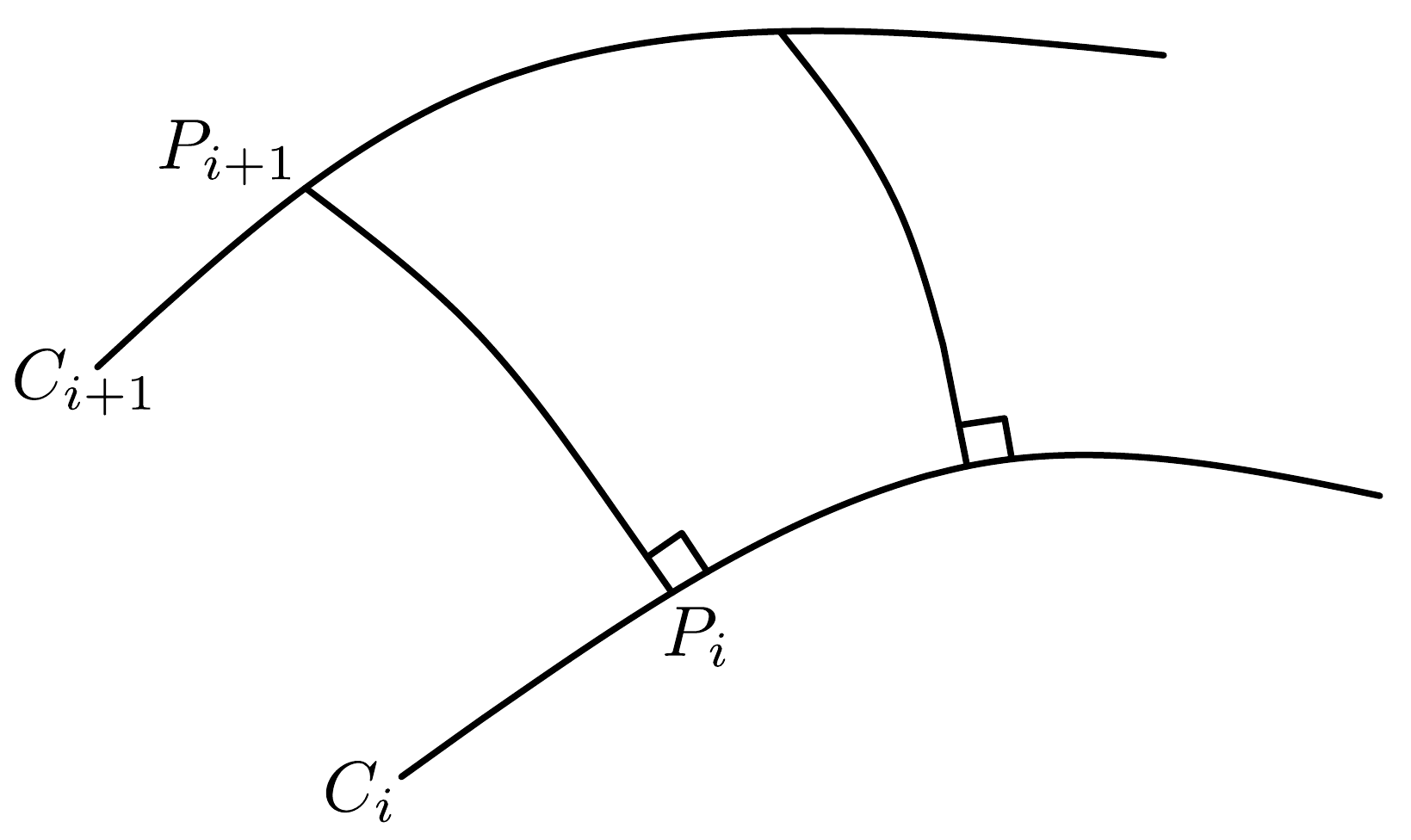}
\label{fig:path-interval}
}
\hspace*{\fill}
\caption{Illustrations of path parameters. (a) scallop height and path interval; (b) path interval.}\label{path-parameter-illustration}
\end{figure*}

\section{Optimal iso-level tool path}
\label{sec:2}
Consider a surface $S$ embedded in ${\mathbb{R}}^{3}$ and a scalar function $\varphi: S\to \mathbb{R}$ defined over it. The curves on $S$ which correspond to a set of values $\{l_i\}_{i=1}^n$ bounded by the range of the scalar function are selected as tool path for the surface. There are two problems to concern when generating tool path following this strategy: the design of $\varphi$ and the mathematical method for determining $\{l_i\}$. In this section, we describe our solution to them, and demonstrate how to plan iso-level tool paths.

\subsection{Iso-scallop tool path generation}
\label{sec:2.1}
In general, a tool path is discretized as a family of curves on the surface. Scallop refers to the remaining material that is generated when the cutter sweeps along two neighbor paths, which results in deviation between the machined surface and the design surface. Generally, we use the height from the points at the ridge of the scallop to the design surface to quantify such error, as illustrated in Fig. \ref{fig:path-parameters}. On one hand, the closer the two neighboring curves are, the lower the scallop height becomes. On the other hand, closer curves may lead to longer path and time to machine the whole surface. Therefore, the iso-scallop method generates tool paths with the scallop height as high as a specific tolerance in order to avoid redundant machining and achieves higher efficiency. The scallop height is determined by the interval between two neighbor paths and they are related by the following formula \cite{koren1996efficient}
\begin{equation}\label{equ:1}
h=\frac{{{\kappa }_{s}}+{{\kappa }_{c}}}{8}{{w}^{2}}+{\mathrm O}\left( {{w}^{3}} \right),
\end{equation}
where $w$ denotes the interval ${p_i}{p_{i+1}}$, $h$ is the scallop height, $\kappa_s$  is the normal curvature along the direction normal to path $C_i$, as shown in Fig. \ref{fig:path-interval}, and $\kappa_c$ is the curvature of the cutter.

Let $C_i, C_{i+1}$ be the iso-level curves $\left\{ p\in S\ |\ \varphi (p)={{l}_{i}} \right\}$ and $\left\{ p\in S\ |\ \varphi (p)={{l}_{i+1}} \right\}$, respectively. Then appeal to the Taylor's theorem, we have
\begin{equation}\label{equ:2}
{{l}_{i+1}}-{{l}_{i}}={{\left( \nabla \varphi  \right)}^{T}}\left( {{p}_{i+1}}-{{p}_{i}} \right)+{\mathrm O}\left( {{\left\| {{p}_{i+1}}-{{p}_{i}} \right\|}^{2}} \right).
\end{equation}
The gradient $\nabla \varphi$ is a vector in the tangent plane of the surface at point $p_i$, and normal to the path. Therefore, the expression can be rewritten as
\begin{equation}\label{equ:3}
\left| {{l}_{i+1}}-{{l}_{i}} \right|=\left\| \nabla \varphi  \right\|\cdot \left\| {{p}_{i+1}}-{{p}_{i}} \right\|+{\mathrm O}\left( {{\left\| {{p}_{i+1}}-{{p}_{i}} \right\|}^{2}} \right).
\end{equation}
Then
\begin{equation}\label{equ:4}
\left\| \nabla \varphi \right\|=\lim_{\left\| {{p}_{i+1}}-{{p}_{i}} \right\| \rightarrow 0} \frac{\left| {{l}_{i+1}}-{{l}_{i}} \right|}{\left\| {{p}_{i+1}}-{{p}_{i}} \right\|}.
\end{equation}
If the level increment $\left| {{l}_{i+1}}-{{l}_{i}} \right|$ of the scalar funciton is endowed with a machining meaning by letting it equal to the square root of scallop height, Eq.~\eqref{equ:4} will be
\begin{equation}\label{equ:5}
\left\| \nabla \varphi  \right\|=\sqrt{\frac{{{\kappa }_{s}}+{{\kappa }_{c}}}{8}},
\end{equation}
and the scallop height between two neighbor iso-level curves of $\varphi$ will be constant and equal to the square of the increment. This can be easily verified by substituting Eq.~\eqref{equ:1} into Eq.~\eqref{equ:4}.

Thus an iso-scallop tool path can be generated by finding a scalar function satisfying Eq.~\eqref{equ:5}. And we obtain such $\varphi$ by solving a nonlinear least square problem
\begin{equation}\label{equ:iso-scallop-energy}
\RevisionBlack{\min_{\varphi} \ \ {{E}_{w}(\varphi)}=\int_{\mathcal{S}}{{{\left( \left\| \nabla \varphi  \right\|-\sqrt{\frac{{{\kappa }_{s}}+{{\kappa }_{c}}}{8}} \right)}^{2}}d\mathcal{S}},}
\end{equation}
where $\kappa_c$ is a user input and $\kappa_s$ is computed by

\begin{equation}\label{equ:7}
\RevisionBlack{{{\kappa }_{s}} = {{\left( \frac{\nabla\varphi}{\|\nabla\varphi\|} \right)}^{T}}T\left(\frac{\nabla\varphi}{\|\nabla\varphi\|}\right),}
\end{equation}
with $T$ denoting the curvature tensor (see \cite{Carmo76, rusinkiewicz2004estimating} for its definition and numerical computation).

Finally, the iso-level curves corresponding to level values $\left\{ i\sqrt{h} \right\}_{i=1}^{n}$ are iso-scallop tool path with constant height $h$, that is, level increments between neighboring iso-level curves all equal to $\sqrt{h}$. Thus, a large $\sqrt{h}$ can generate a tool path for rough machining, and a small increment for finish machining. The novelty here is that they share the same scalar function. We refer to this as multiresolution property.

\subsection{Smooth tool path generation}
\label{sec:2.2}

As explained in the introduction section, a smooth tool path is preferred as we can get a nearly constant feed rate along it.
For a curve in 3D space, its curvature measures how much it bends at a given point. This is quantified by the norm of its second derivative with respect to arc-length parameter, which measures the rate at which the unit tangent turns along the curve~\cite{Carmo76}. It is the very metric to measure the smoothness of the curve.

As the tool path is embedded on the design surface, its second derivative with respect to arc-length parameter can be decomposed into two components, one tangent to the surface and the other normal to the surface (see Fig.~\ref{fig:curvature}) \citep{Carmo76}. The norms of these components are called the geodesic curvature and the normal curvature respectively, and they are related to the curve curvature by
\begin{equation}\label{equ:smooth-term}
\kappa^{2} = \kappa_{g}^{2} + \kappa_{n}^{2},
\end{equation}
where $\kappa$ is the curve curvature, and $\kappa_{g}, \kappa_{n}$ are the geodesic curvature and the normal curvature respectively.

For an iso-level curve $\phi = const$, its normal curvature can be computed by
\begin{equation}
\label{eq-normalCurvature1}
\begin{array}{ccl}
{\kappa }_{n} &=& \left(n\times\frac{\nabla \varphi}{\|\nabla \varphi\|} \right)^{T} T \left( n\times\frac{\nabla \varphi}{\|\nabla \varphi\|} \right) \\
&=& \left( A\frac{\nabla\varphi}{\|\nabla\varphi\|})^{T}T(A\frac{\nabla\varphi}{\|\nabla\varphi\|} \right) \\
&=& \left( \frac{\nabla\varphi}{\|\nabla\varphi\|})^{T}T^{'}(\frac{\nabla\varphi}{\|\nabla\varphi\|} \right),
\end{array}
\end{equation}
where $T$ is the curvature tensor, $n=\left( {{n}_{x}},{{n}_{y}},{{n}_{z}} \right)^T$ is the unit normal vector of the surface, and
\begin{equation}
\label{eq-cross}
A = \left[ \begin{array}{ccc}
0 & -n_{z} & n_{y} \\
n_{z} & 0 & -n_{x} \\
-n_{y} & n_{x} & 0 \end{array} \right], \quad T^{'} = A^{T}TA.
\end{equation}
The geodesic curvature can be computed by
\begin{equation}\label{eq-geodesic}
\kappa_{g} = \div \left( \frac{\nabla \varphi}{\|\nabla \varphi\|} \right),
\end{equation}
where $\div(\cdot)$ is the divergence operator. For a planar curve, its normal curvature is zero and we have
\begin{equation}\label{eq-planar-curvature}
\kappa ={{\kappa }_{g}}=div(\frac{\nabla \varphi }{\|\nabla \varphi \|})\ne div(\nabla \varphi )={{\nabla }^{2}}(\varphi ).
\end{equation}
Therefore, Laplacian can't ensure smoothness of a tool path for pocket milling.

To guarantee the smoothness of all iso-level curves on surface $S$, we define the smoothness energy as
\begin{equation}
\begin{array}{ccl}
{E_{\kappa }}(\varphi )  = \int_{\mathcal{S}}\kappa^{2} d\mathcal{S}
 = \int_{\mathcal{S}}\kappa_{g}^{2}d\mathcal{S} + \int_{\mathcal{S}}\kappa_{n}^{2}d\mathcal{S}.
\end{array}
\label{eq:curveCurvature}
\end{equation}

In Section \ref{sec:2.1}, we employ the formula $\left| {{l}_{i+1}}-{{l}_{i}} \right| = \sqrt{h}$  to generate iso-level tool path. But for smooth tool path, the following strategy is exploited: First, a certain number of points are sampled from iso-level curve $C_i$; Then the level increment $\left| {{l}_{i+1}}-{{l}_{i}} \right|$ is computed for each point with respect to a given scallop height $h$ using Eq.~\eqref{equ:1} and Eq.~\eqref{equ:3}; Finally, the smallest level increment is chosen to be the level increment between $C_i$ and its next path $C_{i+1}$. This results in level increments of different values as opposed to the iso-scallop method, while the scalar function remains unchanged, i.e., the multiresolution property still holds.
\begin{figure}[htb]
\centering
\hfill
\includegraphics[width=0.37\textwidth]{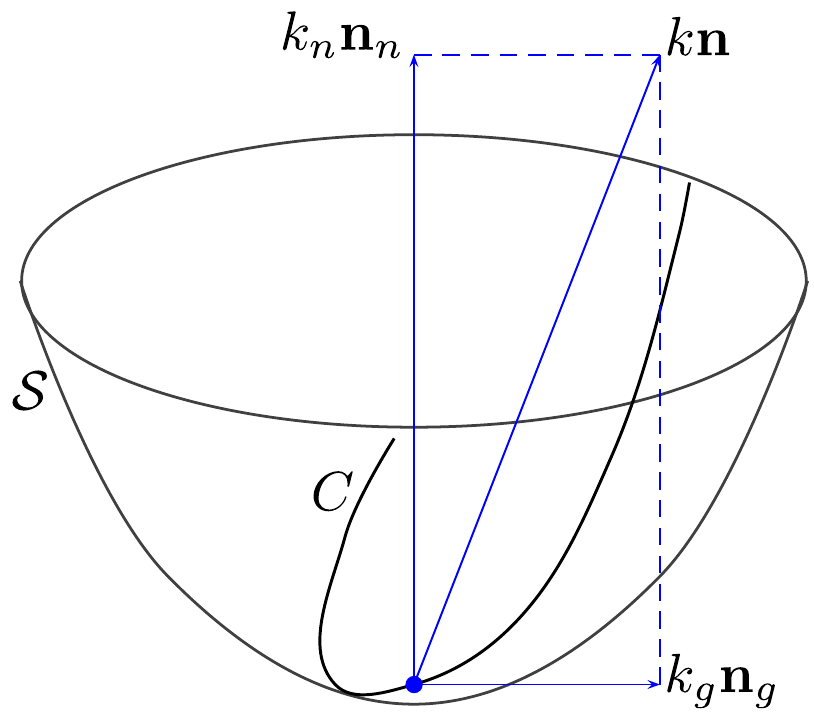}
\hspace*{\fill}
\caption{The curvature vector $k\mathbf{n}$ of curve $C$ on $\mathcal{S}$ has two orthogonal components: the normal curvature vector
$k_{n} \mathbf{n}_n$ and the geodesic curvature vector $k_{g} \mathbf{n}_g$.}\label{fig:curvature}
\end{figure}

\subsection{Optimal tool path generation}
\label{sec:2.3}
The width term Equ.~\eqref{equ:5} and the smoothness term Equ.~\eqref{equ:smooth-term} can control the interval between neighboring paths and smoothness of the paths respectively. Thus an optimal tool path in terms of iso-scallop and smoothness can be obtained by computing $\varphi$ through a nonlinear least square optimization which minimizes a linear combination of the two energies Equ.~\eqref{equ:iso-scallop-energy} and Equ.~\eqref{eq:curveCurvature}
\begin{equation}\label{equ:10}
E(\varphi )=E_w (\varphi) + \lambda E_\kappa (\varphi), 
\end{equation}
where $\lambda$ is a positive weight controlling the trade-off between the two terms. And in order to ensure the tool path is regular (i.e., either contour parallel or direction parallel), we introduce a hard constraint $\left\| \nabla \varphi  \right\| >  0$. The impact of this constraint is demonstrated in Section \ref{sec:2.4}. Then the optimization problem becomes
\begin{align}
	\min_{\varphi} & \int_{S}{{{\left( \left\| \nabla \varphi  \right\|-\sqrt{\frac{{{\kappa }_{s}}+{{\kappa }_{c}}}{8}} \right)}^{2}}+\lambda \left( \kappa _{g}^{2}+\kappa _{n}^{2} \right)}d\mathcal{S} \nonumber \\
	\textrm{s.t.} & \quad \left\| \nabla \varphi  \right\| >  0.
\label{eq:OptimizationModel-continue}
\end{align}

However, because of the smoothness energy, the optimization result may violate Equ.~\eqref{equ:5}, and thus the formula $\left| {{l}_{i+1}}-{{l}_{i}} \right| = \sqrt{h}$ would be invalid. Therefore, we employ the method described in Section \ref{sec:2.2} to select iso-level curves with respect to a certain scallop height tolerance. Note that we can use the same scalar function for planning tool paths of different scallop height tolerances, which again shows the multiresolution property of our approach.

Since different machine tools have different feed rate capability, for those of good capability we can choose a lower weight on smooth term. Thus the freedom of choosing weights ${\lambda }$ provides the possibility of applying the proposed method to various machine tools.

\subsection{Path topology}
\label{sec:2.4}
In this section, we will show that each iso-level curve generated by the proposed method is either a closed loop or a curve segment without self-intersection and disjunction. In addition, this kind of path topology can be exploited to quickly extract iso-level curves.

\begin{lem}
\label{lemma1}
For a given scalar function $\varphi$ defined over a surface $S$, if the norm of its gradient does not vanish anywhere, the endpoints of iso-level curves (if they exist) are on the boundary.
\end{lem}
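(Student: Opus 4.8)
The plan is to exploit the fact that the non-vanishing gradient condition makes every level value a regular value of $\varphi$, so that each iso-level curve is, near every interior point, a smooth embedded arc. The statement then follows by arguing that such an arc cannot simply terminate in the interior of $S$. Concretely, I would argue by contradiction: suppose some iso-level curve $C=\{p\in S\mid \varphi(p)=l\}$ has an endpoint $p_0$ lying in the interior of $S$, and derive a contradiction from the local structure of $C$ near $p_0$.

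First I would fix local coordinates $(u,v)$ on a coordinate patch of $S$ around the interior point $p_0$, in which $\varphi$ becomes an ordinary smooth function of two real variables with $\nabla\varphi(p_0)\neq 0$. Since the gradient does not vanish, at least one partial derivative, say $\partial\varphi/\partial u$, is nonzero at $p_0$. The implicit function theorem then guarantees that in a neighborhood of $p_0$ the equation $\varphi(u,v)=l$ can be solved as a graph $u=g(v)$ for a smooth function $g$ defined on an open interval of $v$-values containing the $v$-coordinate of $p_0$.

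This graph is exactly the portion of $C$ near $p_0$, and it is a smooth embedded arc passing through $p_0$ and extending to both sides of it. Hence $p_0$ is an interior point of $C$, not an endpoint, contradicting the assumption. Therefore no endpoint of an iso-level curve can lie in the interior of $S$, and every endpoint (if any) must lie on $\partial S$. Equivalently, each level set is a smooth $1$-manifold whose connected components are either closed loops contained in the interior or arcs meeting the boundary at their ends, which is precisely the claimed topology and the basis for the fast extraction mentioned afterwards.

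I expect the only delicate point to be making precise what an \emph{endpoint} is, namely a limit point of $C$ at which $C$ fails to continue as a two-sided arc, and confirming that the implicit function theorem indeed rules this out at every interior regular point. The remainder is a routine application of the implicit function theorem together with the classification of connected $1$-manifolds, so I do not anticipate a substantive obstacle beyond the careful treatment of the boundary.
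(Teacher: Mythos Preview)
Your argument is correct and follows essentially the same idea as the paper's proof: at any interior point with $\|\nabla\varphi\|\neq 0$, the level set extends to both sides of the point, so it cannot terminate there. The paper phrases this via a first-order Taylor expansion (noting that the two directions orthogonal to $\nabla\varphi$ preserve the level value), whereas you invoke the implicit function theorem; your version is the more rigorous formalization of the same local observation.
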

\begin{proof}
For an interior point $p$, $\| \nabla \varphi \| \neq 0$ implies that along the two directions $\triangle p_1 , \triangle p_2$ orthogonal to $\nabla \varphi$, we have, in a small range, the following expression
\begin{equation}
\varphi (p + \triangle p_i) - \varphi (p) = {(\nabla \varphi)}^T \triangle p_i = 0 \quad \textrm{for}\quad i=1,2.
\end{equation}
Namely, each interior point has exactly two directions sharing the same level value with it. Therefore, the endpoints can only be on the boundary.
\end{proof}

\begin{lem}
\label{lemma2}
For the scalar function $\varphi$, its iso-level curves never intersect with each other and do not have self-intersections.
\end{lem}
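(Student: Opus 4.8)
The plan is to split the statement into two independent claims --- that iso-level curves of different levels do not intersect one another, and that a single iso-level curve has no self-intersection --- and to dispatch them by a single-valuedness argument and by the implicit function theorem, respectively.

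First I would observe that non-intersection of distinct iso-level curves is essentially a restatement of the fact that $\varphi$ is a well-defined (single-valued) function. Two distinct iso-level curves $C_i$ and $C_j$ correspond to distinct level values $l_i \neq l_j$. If they met at a common point $p$, then $\varphi(p)$ would simultaneously equal $l_i$ and $l_j$, which is impossible. Hence distinct iso-level curves are disjoint, and this part requires no regularity hypothesis at all.

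The substantive part is ruling out self-intersection of a single curve $C = \{p\in S \mid \varphi(p)=l\}$, and here I would invoke the hypothesis $\|\nabla\varphi\| > 0$ imposed throughout via the hard constraint in \eqref{eq:OptimizationModel-continue}. Since the intrinsic gradient never vanishes, every value $l$ is a regular value of $\varphi$, so by the implicit function theorem applied on the surface $S$, each level set is locally an embedded one-dimensional submanifold of $S$. Concretely, near any $p \in C$ the tangent direction of the curve is forced to be the one-dimensional subspace of the tangent plane $T_p S$ orthogonal to $\nabla\varphi(p)$ --- exactly the local picture already extracted in Lemma~\ref{lemma1}, where each interior point was shown to possess a uniquely determined family of level-preserving directions. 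Because this tangent line is single-valued at every point, the curve cannot branch: a transversal self-intersection would require two distinct arcs of $C$ to pass through a common point with two distinct tangent directions, contradicting uniqueness.

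The step I expect to need the most care is excluding a \emph{tangential} self-intersection, where two branches would touch with a common tangent line rather than cross transversally; the bare tangent-direction argument does not by itself rule this out. This is handled by the full strength of the implicit function theorem: a nonvanishing gradient guarantees that in a neighbourhood of $p$ the entire level set coincides with a single connected embedded arc, leaving no room for a second branch of any kind. Finally I would note that the global level set $\varphi^{-1}(l)$ is thereby a disjoint union of embedded circles and arcs --- each arc terminating on $\partial S$ by Lemma~\ref{lemma1} --- none of which can meet another, which completes the argument.
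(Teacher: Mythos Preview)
Your argument is correct and follows the same overall decomposition as the paper: first dispose of distinct levels by single-valuedness, then split the self-intersection case into transversal and tangential contact. For the transversal case you and the paper give the same reasoning---two branches would force two distinct tangent lines, hence two distinct gradient directions at one point.

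The only genuine difference is in how the tangential case is closed. The paper argues directly that if two branches of the same level touched tangentially and then separated, the region between them would have to carry level values different from $l$ (since $\|\nabla\varphi\|\neq 0$ forces nearby iso-level curves to have different levels), contradicting the assumption that both bounding arcs lie on the same level. You instead invoke the implicit function theorem on $S$: a regular value forces the level set to be, locally, a single embedded arc, so no second branch can exist. Your route is the standard differential-topology argument and is cleaner and more robust; the paper's version is more elementary and pictorial but slightly informal in how it rules out the two branches coalescing. Either way the conclusion is the same, and your final remark that $\varphi^{-1}(l)$ is globally a disjoint union of embedded arcs and circles (with arc endpoints on $\partial S$ by Lemma~\ref{lemma1}) is a nice explicit statement of what both proofs implicitly establish.
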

\begin{proof}
Since each point corresponds to a unique value, iso-level curves for different values do not intersect with each other. Generally, we have two types of self-intersections, as shown in Fig. \ref{fig:selfIntersection}. The difference between them is that in case (b) the self-intersection is tangential. For case (a), the two curve segments have different tangent directions at the self-intersection point. It is well-known that the gradient direction at a point is orthogonal to the tangent direction of the iso-level curve. Thus the two different tangent directions at the self-intersection point results in a contradiction that there are two different gradient direction at that point. For case (b), we view the self-intersection as two iso-level curves that are of same level value and tangential at the intersection point and separate from each other in its neighborhood. But $\| \nabla \varphi \| \neq 0$ along an iso-level curve implies that iso-level curves near it are of different level values, which again leads to a contradiction.
\end{proof}


\begin{figure}[htbp]
\centering
\hfill
\subfigure[]{
\includegraphics[width=0.37\textwidth]{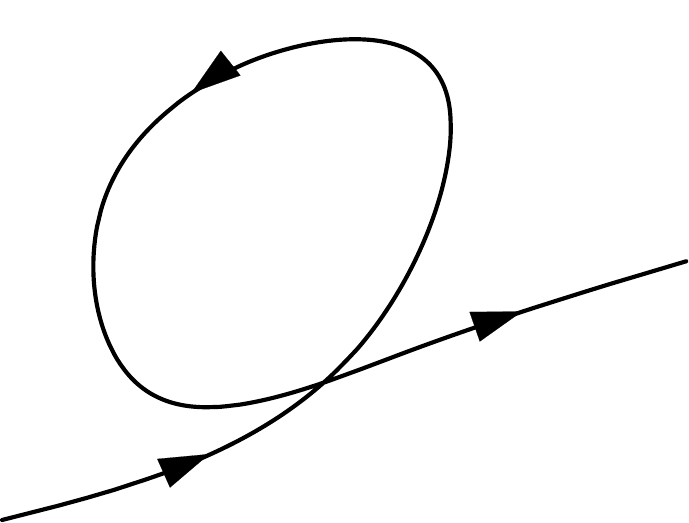}
\label{fig:selfIntersection-a}
}
\hfill
\subfigure[]{
\includegraphics[width=0.27\textwidth]{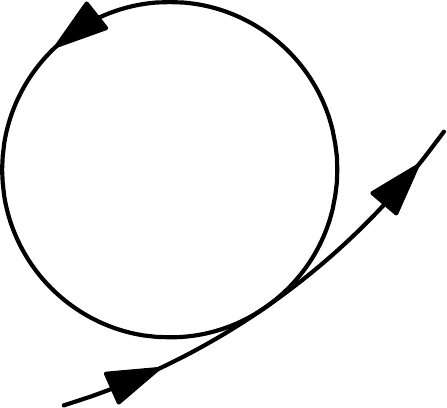}
\label{fig:selfIntersection-b}
}
\hspace*{\fill}
\caption{Illustrations of self-intersection}\label{fig:selfIntersection}
\end{figure}

Note that these properties are employed to extract each desired iso-level curves in the following sections so that traversal is not needed. As Lemma \ref{lemma1} and Lemma \ref{lemma2} show, for any interior point of the surface, it has and only has two directions that share the same level value with the point. Accordingly, we can use a "Seed Growth" like algorithm to find the iso-level curves, namely, if we want to find the iso-level curve of a given level value, say $l$, we can start from an initial edge on which there exists a point whose level value is $l$, and then search through the edge's adjacent triangles (if the point is a vertex of the mesh, i.e., the endpoint of the edge, the adjacent triangles are all the 1-ring triangles) to get exactly two edges containing level value $l$, repeat this procedure and finally the initial point can grow to be an iso-level curve of interest. In addition, it follows immediately from Lemma \ref{lemma1} and Lemma \ref{lemma2} that:
\begin{prop}
\label{proposition}
Each iso-level path generated by the proposed method is either contour parallel or direction parallel and free from self-intersection and disjunction.
\end{prop}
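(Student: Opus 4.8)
The plan is to recognize this proposition as a direct consequence of the classification of connected $1$-manifolds, with the two preceding lemmas supplying the required local and global structure. First I would observe that the hard constraint $\|\nabla\varphi\| > 0$ imposed in \eqref{eq:OptimizationModel-continue} means that every level value $l_i$ is a regular value of $\varphi$. Hence, by the implicit function theorem, each level set $\{p \in S \mid \varphi(p) = l_i\}$ is a smooth embedded $1$-dimensional submanifold of the $2$-dimensional surface $S$, with any boundary points necessarily lying on $\partial S$ --- which is exactly the content of Lemma~\ref{lemma1}.

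Next I would invoke the classification theorem for compact $1$-manifolds: every connected component of such a manifold is diffeomorphic either to a circle or to a closed interval. A component diffeomorphic to a circle is a closed loop, which is precisely a \emph{contour-parallel} path; a component diffeomorphic to a closed interval is a simple arc whose two endpoints, by Lemma~\ref{lemma1}, must lie on $\partial S$, which is precisely a \emph{direction-parallel} path. This establishes the claimed dichotomy for each iso-level path.

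Finally I would dispatch the two remaining properties. Freedom from self-intersection is exactly Lemma~\ref{lemma2}, so nothing further is needed there. Freedom from disjunction (i.e.\ branch points or $T$-junctions) follows because the submanifold structure forces a neighborhood of every interior point to be diffeomorphic to an open interval; equivalently, each interior point admits exactly the two continuation directions identified in the proof of Lemma~\ref{lemma1}, so no point of degree three or higher can occur and the path cannot branch. This is also what makes the discrete ``Seed Growth'' extraction well defined, since the search always returns exactly two edges carrying the level value.

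The main obstacle I anticipate lies not in the local structure but in the global tracing argument hidden inside the classification step: one must rule out a component that neither closes up into a loop nor terminates on $\partial S$. On a compact surface $S$ this is excluded by compactness --- an injective, locally-interval curve that keeps extending must eventually either return to close itself or reach the boundary --- but the argument genuinely requires combining the local two-directions property of Lemma~\ref{lemma1} with the global embeddedness of Lemma~\ref{lemma2}, as neither lemma alone suffices. Since both lemmas are already in hand, the proposition does follow at once, which justifies the ``it follows immediately'' phrasing; the only care needed is to make the compactness hypothesis on $S$ explicit.
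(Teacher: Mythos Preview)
Your proposal is correct and follows exactly the route the paper takes: the paper gives no explicit proof for this proposition at all, stating only that ``it follows immediately from Lemma~\ref{lemma1} and Lemma~\ref{lemma2}.'' Your argument is a rigorous unpacking of precisely that implication---invoking the regular-value/implicit-function-theorem viewpoint and the classification of compact $1$-manifolds to turn the two lemmas into the claimed dichotomy---so it matches the paper's approach while supplying the details the paper omits.
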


\section{Numerical solution}
\label{sec:3}
Numerically, the iso-level method described in the above section can be applied to any domain with a discrete gradient operator $\nabla$, divergence operator $\div(\cdot)$, and curvature tensor $T$. To solve the optimization models for free-from surfaces, we appeal to the Finite Element Method (FEM), i.e., in this work, we focus on triangular meshes. However, this method can be easily extended to other domains, such as point clouds.

Assume that $M\subset\mathbb{R}^3$ is a compact triangulated surface with no degenerate triangles. Let $N_1(i)$ be the $1$-neighborhood of vertex $v_i$, which is the index set for vertices connecting to $v_i$. Let $D_1(i)$ be the $1$-disk of the vertex $v_i$, which is the index set for triangles containing $v_i$. The dual cell of a vertex $v_i$ is part of its $1$-disk which is more near to $v_i$ than its $N_1(i)$. Fig.~\ref{fig-DMeshCCellCCoeff} (a) shows the dual cell $C_i$ for an interior vertex $v_i$, while Fig.~\ref{fig-DMeshCCellCCoeff} (b) shows the dual cell for a boundary vertex. A function $\varphi$ defined over the triangulated surface $M$ is considered to be a piecewise linear function, such that $\varphi$ reaches value $\varphi_i$ at vertex $v_i$ and is linear within each triangle. Based on these, the energies shown in Equ.~\eqref{equ:iso-scallop-energy}, Equ.~\eqref{eq:curveCurvature}, and Equ.~\eqref{eq:OptimizationModel-continue} are computed by integrating the width term and smooth term over the whole mesh domain, while the mesh domain can be decomposed into a set of triangles or a set of dual cells. To compute the width term and the smooth term on a mesh, we need to discretize the gradient, the divergence, and the curvature tensor, which we will describe briefly, since they are basic in FEM.

\begin{figure}[htbp]
\centering
\hfill
\subfigure[]{
\includegraphics[width=0.28\textwidth]{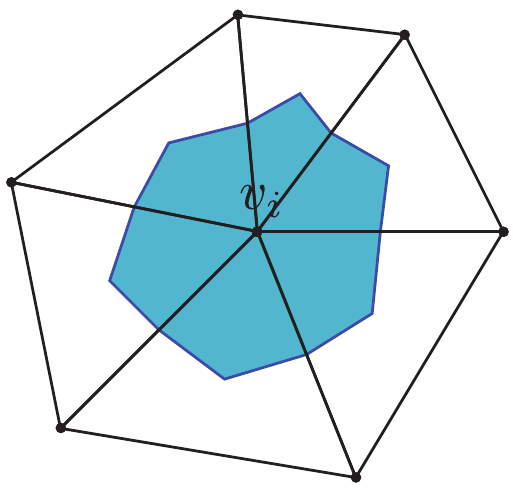}
}
\hfill
\subfigure[]{
\includegraphics[width=0.28\textwidth]{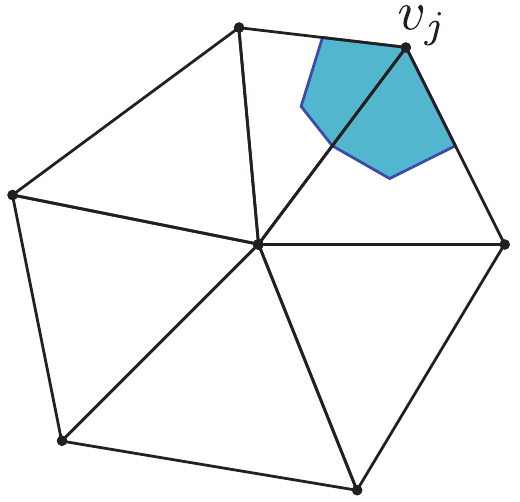}
}
\hspace*{\fill}
\caption{ Dual cells for triangular meshes. (a) dual cell of an interior vertex; (b) dual cell of a boundary vertex.}
\label{fig-DMeshCCellCCoeff}
\end{figure}

\begin{figure}[htbp]
\begin{center}
\hfill
\subfigure[]{
\includegraphics[width=0.3\textwidth]{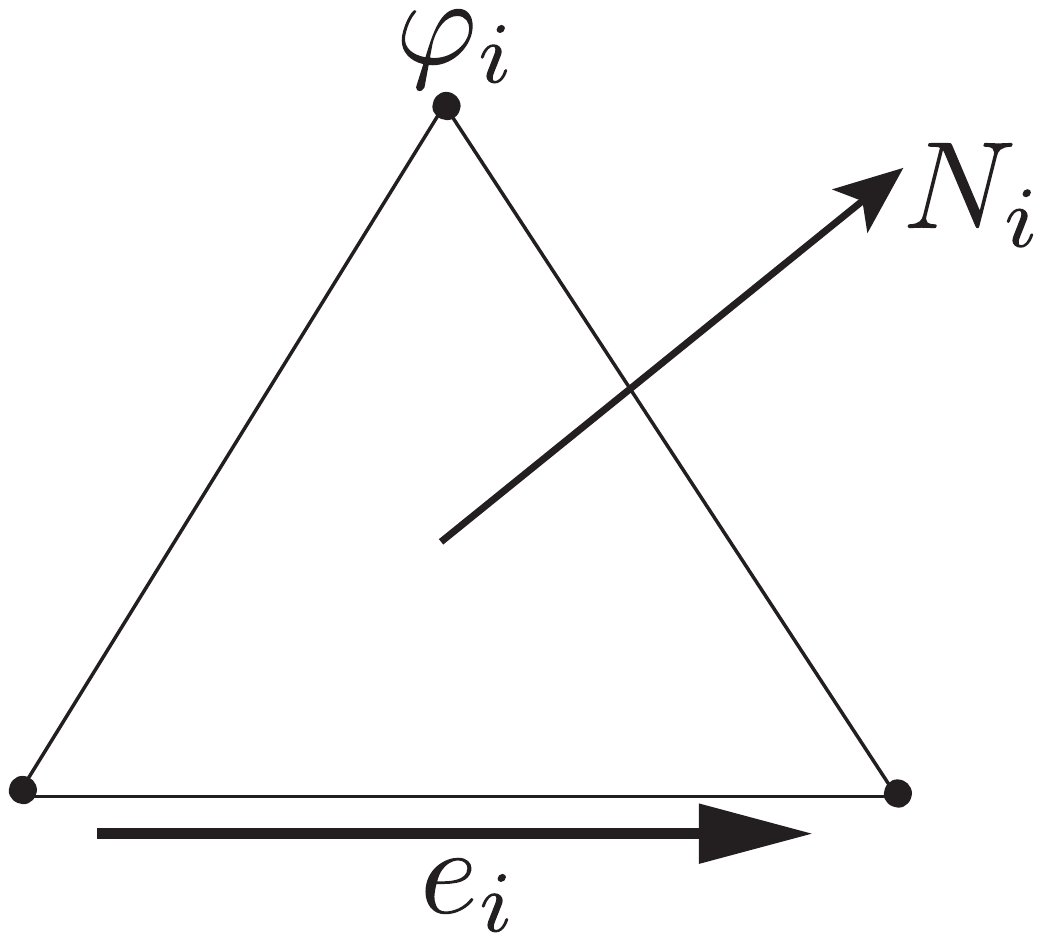}
}
\hfill
\subfigure[]{
\includegraphics[width=0.3\textwidth]{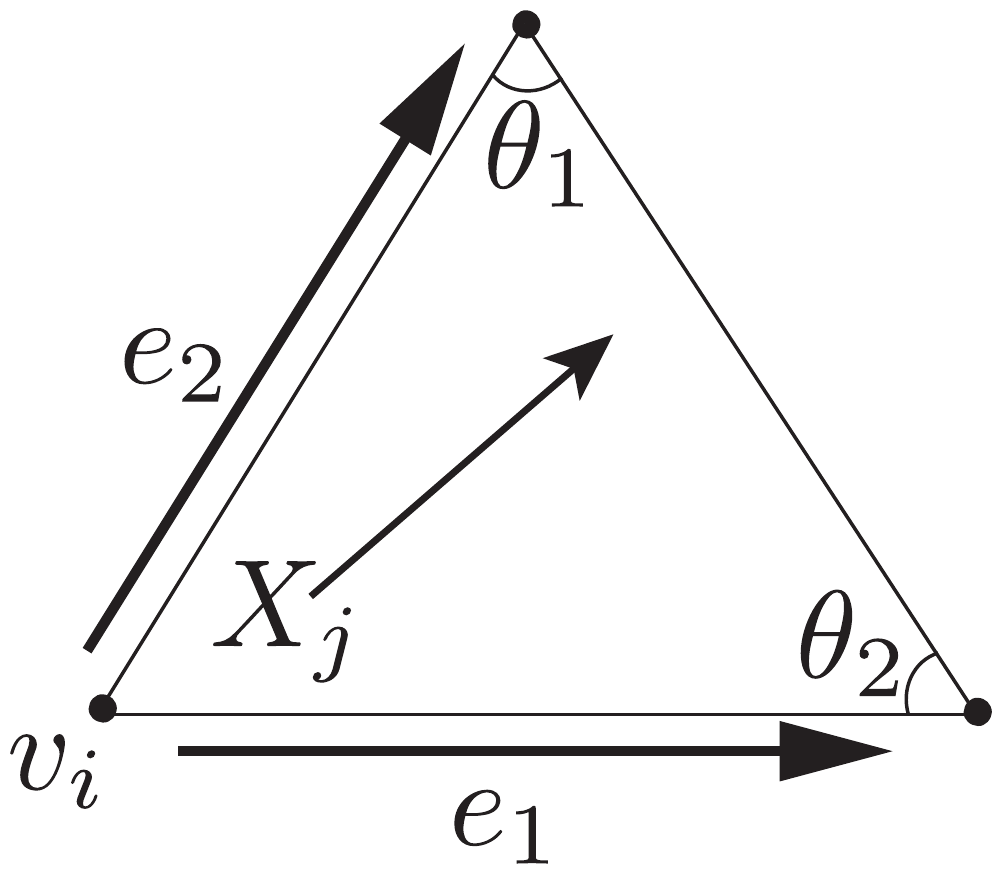}
}
\hspace*{\fill}
\caption{\label{fig-Differential-Operators} Computation of gradient and divergence within an element (i.e., a triangle). (a) gradient; (b) divergence.}
\end{center}
\end{figure}

The gradient of $\varphi$ over each triangle is constant as the function $\varphi$ is linear within the triangle. The gradient in a given triangle can be expressed as
\begin{equation}\label{equ-gradient}
\nabla \varphi (f_i)= \frac{1}{2A_{i}}\sum\limits_{j \in \Omega_i}\varphi_{j}(N_i\times e_j),
\end{equation}
where $A_{i}$ is the area of the face $f_i$, $N_i$ is its unit normal, $\Omega_i$ is the set of edge indices for face $f_i$, $e_j$ is the $j-$th edge vector (oriented counter-clockwise), and $\varphi_i$ is the opposing value of $\varphi$ as shown in Fig.~\ref{fig-Differential-Operators}.


According to the Stokes' theorem, the integral of divergence over the dual cell is equal to the outward flux along the boundary of the dual cell.  Thus the divergence operator associated with vertex $v_i$ is dicretized by dividing the outward flux by the dual cell area
\begin{equation}
\div(X) = \frac{1}{2C_i}\sum\limits_{j\in D_1(i)}\cot\theta_{j}^1(e_{j}^1\cdot X_j) + \cot\theta_{j}^2(e_{j}^2\cdot X_j),
\end{equation}
where the sum is taken over the vertex's incident triangles $f_{j}$ with a vector $X_j$, $e_{j}^1$ and $e_{j}^2$ are the two edge vectors of triangle $f_{j}$ containing vertex $v_i$, $\theta_{j}^1$ and $\theta_{j}^2$ are the opposing angles, and $C_i$ is the dual cell area for vertex $v_i$. Accordingly, the geodesic curvature value of curve $\varphi = const$ associated with the vertex $v_i$ can be computed by
\begin{equation}
\RevisionBlack{\kappa_{g}^{i} = \frac{1}{2C_i}\sum\limits_{j \in D_1(i)} \frac{\cot\theta_{j}^1 (e_{j}^2\cdot \nabla \varphi (j)) + \cot\theta_{j}^1 (e_{j}^2\cdot \nabla \varphi (j))}{\|\nabla \varphi (j)\|} .}
\end{equation}

The curvature tensor (second fundamental tensor) $T$ is defined in terms of the directional derivatives of the surface normal:
\begin{equation}
T = \left(\begin{array}{cc} D_{u}n& D_{v}n\end{array}\right) = \left( \begin{array}{cc}
\frac{\partial n}{\partial u}\cdot u & \frac{\partial n}{\partial v}\cdot u \\
\frac{\partial n}{\partial v}\cdot u & \frac{\partial n}{\partial v}\cdot v \end{array} \right),
\end{equation}
where $(u, v)$ are the directions of an orthogonal coordinate system in the tangent frame (the sign convention used here yields positive curvatures for convex surfaces with outward-facing normals). Multiplying this tensor by any vector in the tangent plane gives the derivative of the normal in that direction. Although this definition holds only for smooth surfaces, we can approximate it in the discrete case using finite difference. In this work, the curvature tensor for each face is computed by the method in~\cite{rusinkiewicz2004estimating}.

Then the whole optimization model can be formulated as
\begin{align}\label{eq:OptimizationModel}
\min_{\varphi} & \sum\limits_{j=1}^{|F|}A_{j}~ \left( \|\nabla \varphi (j)\| - \sqrt{\frac{{\kappa }_{s}^{j} + {\kappa }_{c}}{8}} \right)^2 + \lambda \left(\sum\limits_{j=1}^{|F|} A_{j} {({\kappa}_{n}^{j})}^2 + \sum\limits_{i=1}^{|V|} C_i {({\kappa}_{g}^{i})}^2\right)\nonumber\\
	\textrm{s.t.} & \quad \|\nabla\varphi(j)\| > 0.
\end{align}
where $|F|$ is the number of faces and $|V|$ denotes the number of vertices. This is a well established nonlinear least square optimization problem with inequality constraints, which can be easily solved by the interior point method~\cite{COLEMAN96,Waechter2006,CURTIS2012}. The interior point solver requires the gradient of the target function and the constraint functions. The gradient calculation boils down to computing the gradient of $\nabla\varphi$ and ${\nabla \varphi }/{\left\| \nabla \varphi  \right\|}$, which we do as follows.

As demonstrated previously, the gradient of a piecewise linear scalar function within a given triangle $f_k$ is a linear combination of constant vectors $N_k\times e_i$, and thus, the partial derivative of $\nabla\varphi (k)$ with respect to $\varphi_j$ is
\begin{equation}\label{equ:gradientOfgradient}
\RevisionBlack{\frac{\partial }{\partial \varphi_j }\nabla \varphi (k) =\frac{1}{2{{A}_{k}}}\frac{\partial }{\partial \varphi_j }\sum\limits_{i \in \Omega_k}{{{\varphi }_{i}}}(N_k \times {{e}_{i}})=\frac{1}{2{{A}_{k}}}\sum\limits_{i \in \Omega_k}{{{\delta }_{ij}}(N_k \times {{e}_{i}})},}
\end{equation}
where ${\delta _{ij}} = \left\{ {\begin{array}{*{20}{c}}
1&{i = j}\\
0&{i \ne j}
\end{array}} \right.$ is the Kronecker delta function.

As for the gradient of ${\nabla \varphi }/{\left\| \nabla \varphi  \right\|}$, it is
\begin{equation}\label{equ:partial}
\RevisionBlack{\frac{\partial }{\partial \varphi_j }\left( \frac{\nabla \varphi (k) }{\left\| \nabla \varphi (k)  \right\|} \right)=\frac{\left( \frac{\partial }{\partial \varphi_j }\nabla \varphi (k) \right)\left\| \nabla \varphi (k) \right\|-\nabla \varphi (k) \frac{{{\left( \nabla \varphi (k)  \right)}^{T}}\frac{\partial }{\partial \varphi_j }\nabla \varphi (k) }{\left\| \nabla \varphi (k)  \right\|}}{{{\left\| \nabla \varphi (k)  \right\|}^{2}}}.}
\end{equation}

The final solution to the optimization problem Equ.~\eqref{eq:OptimizationModel} would be affected by the initial value. In this work, we initialize the tool path by paths from~\cite{crane2013geodesics}.

\section{Tool path planning algorithm}
\label{sec:4}
Planning tool-path is to represent a surface with a series of curves against some error criteria (i.e., chord deviation and scallop height). We next summarize the overall process for generating such curves on a surface by the iso-level method as follows:
\begin{enumerate}
  \item Select an initial curve $C_0$ on the surface $S$ and fix its level value to zero, i.e., $l_0 = 0$. $C_0$ is a part of boundary for direction parallel tool path and the whole boundary for contour parallel tool path.
  \item Find the solution to the models Equ.~\eqref{equ:iso-scallop-energy}, Equ.~\eqref{eq:curveCurvature}, and Equ.~\eqref{eq:OptimizationModel-continue}, including meshing and numerical optimization.
  \item Select level values $\left\{ {{l}_{i}} \right\}_{i=1}^{n}$, where $ l_{1} = {{\varphi }_{min}},\ \ l_{(n)}={{\varphi }_{max}}$, with the method described in Section \ref{sec:2.1} for iso-scallop tool path and the method in Section \ref{sec:2.2} for smooth or optimal tool path. For direction parallel tool path the last tool path corresponds to ${{l}_{n}}={\varphi }_{max}$, while for contour parallel tool path the last corresponds to ${{l}_{n-1}}$. Then fastly extract iso-level curves on the triangular mesh based on the method described in Section \ref{sec:2.4}.
  \item Convert the iso-level curves on mesh which actually are polygons to surface $S$. The vertices of an iso-level curve on the mesh are either vertices of the mesh or points on edges of the mesh. For the former case, the vertices are also on $S$. For the latter case, a vertex is first proportionally mapped to the parameter domain with respect to the two ends of the edge it is on and then find its corresponding point on the surface.
  \item Greedily merge short segments of the polygons to approach the chord deviation tolerance as closely as possible. Then finally, these reduced iso-level curves (polygons) are desired tool path.
\end{enumerate}
%

\section{Experimental results}
\label{sec:5}
In this section, the proposed tool path planning method is implemented on real data. A free-form surface and a human face are chosen to illustrate the effectiveness of it, as in Fig.~\ref{fig:model}. The free-form surface is exploited to show the generation of direction parallel tool path. The human face was generated by a coordinate measuring machine. We utilize it to show the generation of contour parallel tool path.

To plan iso-level tool path, the first thing to do is to construct a proper scalar function over the surface. Since the Finite Element Method is employed to find the optimizer of the optimization models, meshing is needed. We choose the element to be triangular. Fig.~\ref{fig:surface-mesh} shows the meshing results of the free-form surface and Fig.~\ref{fig:face-mesh} shows that of the human face. The optimal scalar functions are illustrated in Fig.~\ref{fig:surface-function} and Fig.~\ref{fig:face-function} by varying color. Fig. \ref{fig:surface-function} shows the scalar function of the free-form surface for generating direction parallel tool path and Fig. \ref{fig:face-function} shows that of the human face for generating contour parallel tool path. And the varying from blue to red represents the rising of level value.

As the optimal scalar functions have been constructed for both surfaces, tool path that is optimal with respect to iso-scallop and smoothness can be generated. A ball-end cutter with radius 4mm is chosen to show the path generation so that tool orientation doesn't matter. The limited scallop height is 1mm and chord deviation is 0.01mm. In order to clearly show tool paths, the error criterion (i.e., scallop height) is set to be much greater than those in real cases. Fig.~\ref{fig:surface-path} shows the optimal direction parallel paths on the free-form surface and Fig.~\ref{fig:face-path} shows corresponding result of contour parallel tool path on the human face. Their weights are both $ \lambda  = 1$.

We next show some comparisons and analyses of the generated tool paths. According to the demonstration of \cite{kim2002machining}, the contour parallel tool path will be emphasized. Fig.~\ref{fig:From-smooth-isoscallop} shows the tool paths form smooth to iso-scallop generated by the proposed method. Fig. \ref{fig:high-weight-path} shows the smooth contour parallel tool path generated by the proposed method with $\lambda = 10$, Fig. \ref{fig:mid-weight-path} shows the optimal tool path with $\lambda = 1$, and Fig. \ref{fig:low-weight-path} shows the iso-scallop tool path with $\lambda = 0$. As described in above sections, the iso-scallop condition  Equ.~\eqref{equ:5} characterizes the overlapping between neighbor paths. Therefore, to analyze the overlapping of the generated tool paths, we conduct statistics on the relative deviation w.r.t. the iso-scallop condition along the paths. It is computed by
\begin{equation}\label{}
\Delta =\left| \frac{\left\| \nabla \varphi  \right\|-\sqrt{\frac{{{\kappa }_{s}}+{{\kappa }_{c}}}{8}}}{\sqrt{\frac{{{\kappa }_{s}}+{{\kappa }_{c}}}{8}}} \right|=\left| 1-\frac{\left\| \nabla \varphi  \right\|}{\sqrt{\frac{{{\kappa }_{s}}+{{\kappa }_{c}}}{8}}} \right|.
\end{equation}
And the statistics results are depicted in Fig.~\ref{width-histogram-high-weight}, \ref{width-histogram-mid-weight}, \ref{width-histogram-low-weight}. As the figures show, for the iso-scallop tool path, the relative deviation are all less than 5\%, and centered around 1\%. For the optimal tool path, we could find the ratio move to the greater side, as imagined, and there are a few points which are much greater than the rest. Most of these points are located in the corner parts of the tool path. And for the smooth tool path, its overlapping is much more obvious and there are about 2\% of points whose ratio are greater than 10\%. But the losing of iso-scallop condition brings smoothness to the tool paths, which is shown in Fig.~\ref{fig:curvature-histogram-high-weight}, \ref{curvature-fig:histogram-mid-weight}, \ref{curvature-fig:histogram-low-weight}. In conclusion, the optimal tool path tries to find a balance between the overlapping and smoothness. We also compare the optimal tool path with the Laplacian based one in Fig.~\ref{fig:comparison-laplace}. Although the Laplacian based tool path is obviously smooth than the optimal one, from the overlapping analysis figures, i.e., Fig.~\ref{fig:width-histogram-laplace}, \ref{fig:width-histogram-optimal}, we can find that it is much more severely overlapped for neighbor paths.

\begin{figure*}[htbp]
\centering
\hfill
\subfigure[]{
\begin{minipage}[b]{0.2\textwidth}
\includegraphics[width=1.8\textwidth, origin=bl]{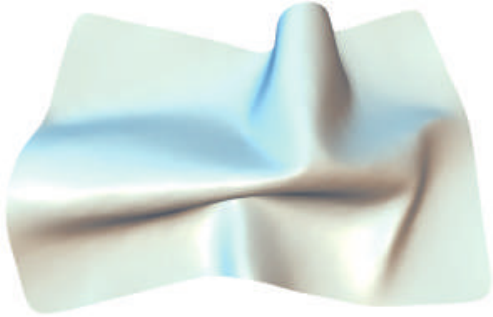}
\end{minipage}\label{fig:surface-model}
}
\hfill \hfill
\subfigure[]{
\begin{minipage}[b]{0.2\textwidth}
\includegraphics[width=0.85\textwidth]{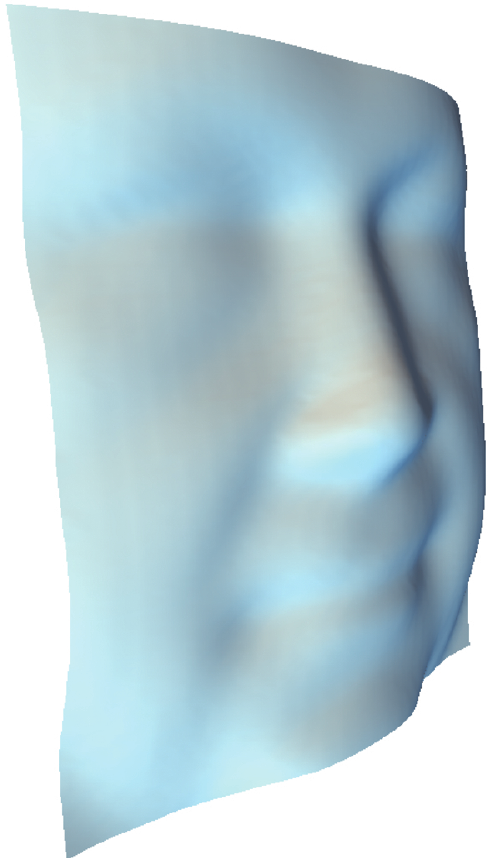}
\end{minipage}\label{fig:face-model}
}
\hspace*{\fill}
\caption{Tested models. (a) free-form surface; (b) human face.}\label{fig:model}
\end{figure*}

\begin{figure*}[htbp]
\centering
\hfill
\subfigure[]{
\begin{minipage}[b]{0.2\textwidth}
\includegraphics[width=1.4\textwidth]{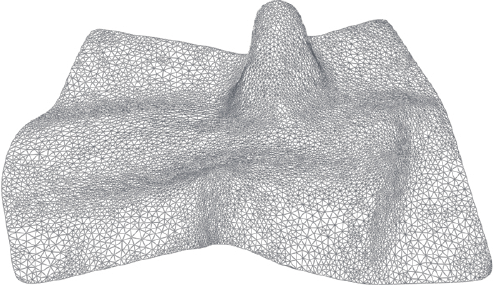}
\end{minipage}\label{fig:surface-mesh}
}
\hfill
\subfigure[]{
\begin{minipage}[b]{0.2\textwidth}
\includegraphics[width=1.4\textwidth]{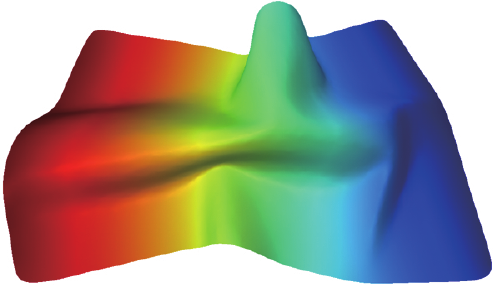}
\end{minipage}\label{fig:surface-function}
}
\hfill
\subfigure[]{
\begin{minipage}[b]{0.2\textwidth}
\includegraphics[width=1.4\textwidth]{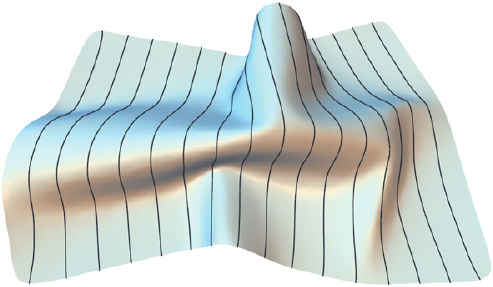}
\end{minipage}\label{fig:surface-path}
}
\hspace*{\fill}
\caption{Direction parallel tool path for the free-form surface. (a) meshing result; (b) optimal scalar function; (c) the generated tool path.}\label{fig:resultDirection}
\end{figure*}

\begin{figure*}[htbp]
\centering
\hfill
\subfigure[]{
\begin{minipage}[b]{0.2\textwidth}
\includegraphics[width=1.2\textwidth, origin=bl]{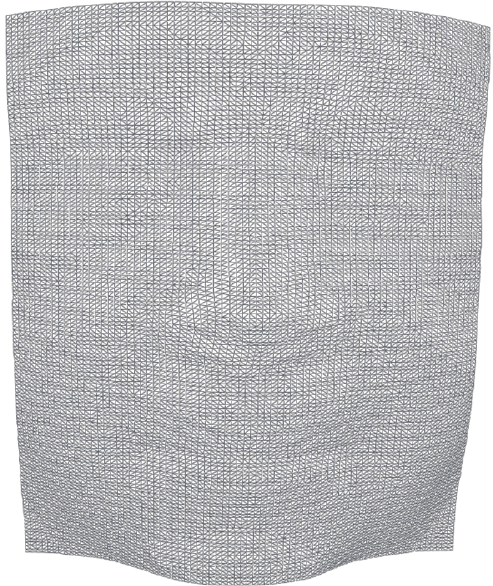}
\end{minipage}\label{fig:face-mesh}
}
\hfill
\subfigure[ ]{
\begin{minipage}[b]{0.2\textwidth}
\includegraphics[width=1.2\textwidth]{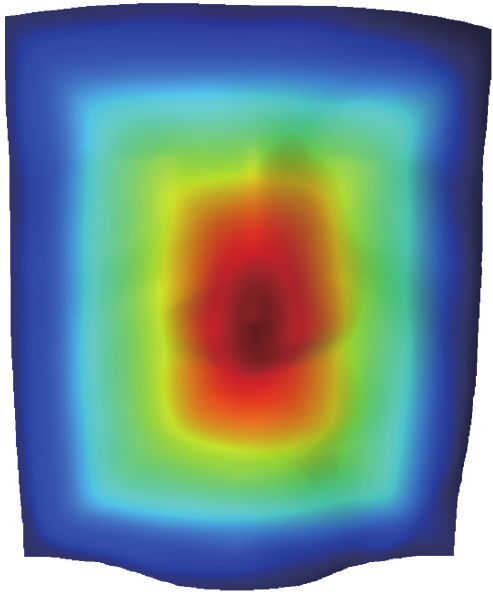}
\end{minipage}\label{fig:face-function}
}
\hfill
\subfigure[ ]{
\begin{minipage}[b]{0.2\textwidth}
\includegraphics[width=1.2\textwidth]{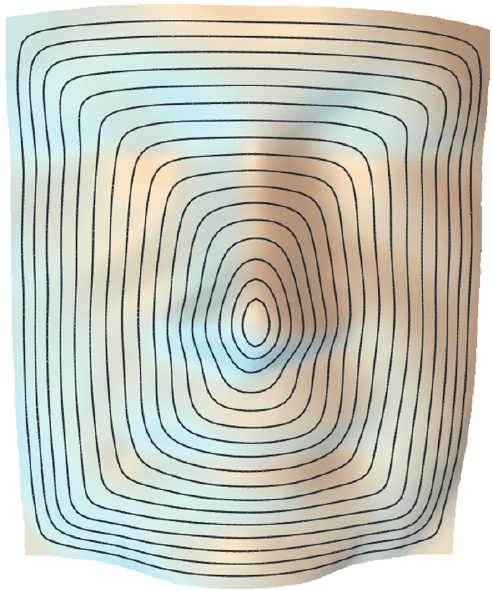}
\end{minipage}\label{fig:face-path}
}
\hspace*{\fill}
\caption{Contour parallel tool path for the face model. (a) meshing result; (b) optimal scalar function; (c) the generated tool path.}\label{fig:resultContour}
\end{figure*}

\begin{figure*}[!htb]
\centering
\hfill
\subfigure[ ]{
\begin{minipage}[b]{0.2\textwidth}
\includegraphics[width=1.2\textwidth, origin=bl]{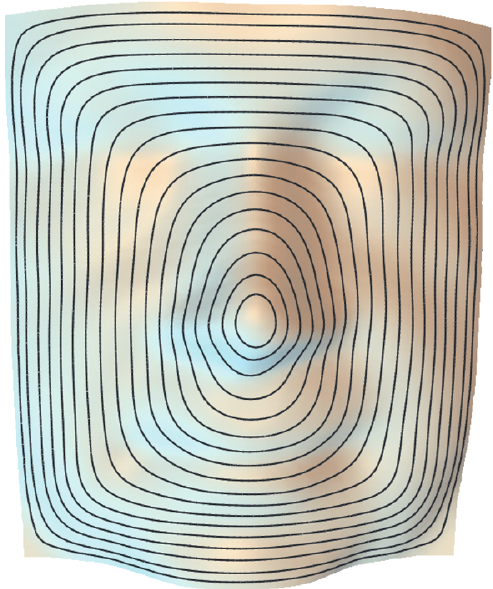}
\end{minipage}\label{fig:high-weight-path}
}
\hfill
\subfigure[ ]{
\begin{minipage}[b]{0.2\textwidth}
\includegraphics[width=1.2\textwidth]{face-smooth-1.pdf}
\end{minipage}\label{fig:mid-weight-path}
}
\hfill
\subfigure[ ]{
\begin{minipage}[b]{0.2\textwidth}
\includegraphics[width=1.2\textwidth]{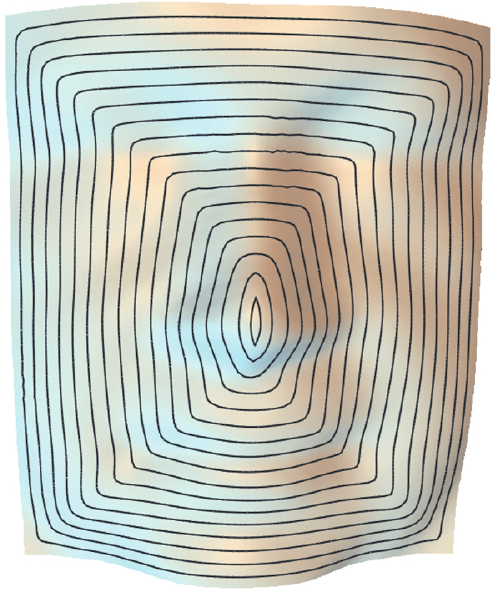}
\end{minipage}\label{fig:low-weight-path}
}
\hspace*{\fill} \\

\hfill
\subfigure[]{
\begin{minipage}[b]{0.2\textwidth}
\includegraphics[width=1.6\textwidth, origin=bl]{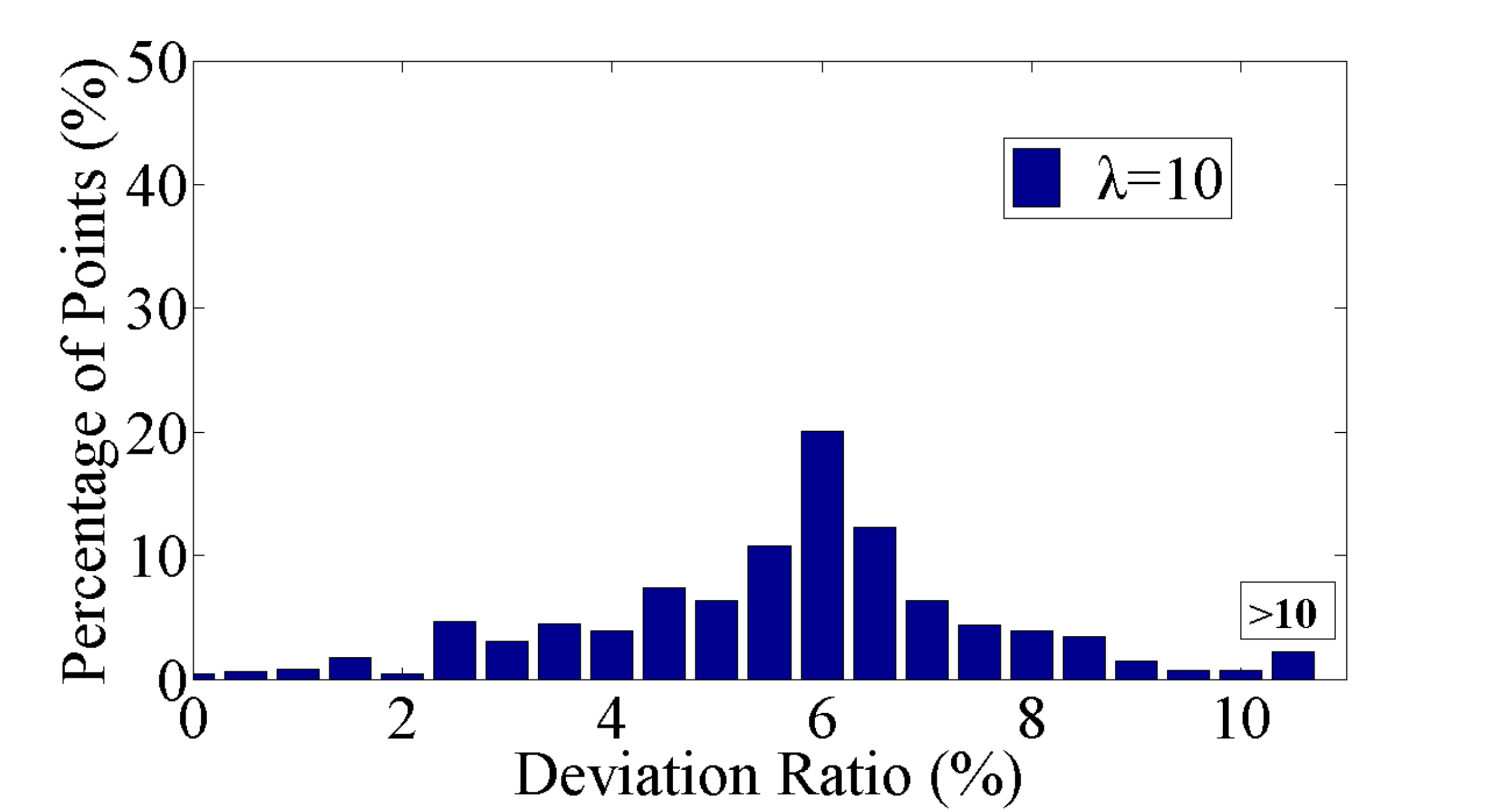}
\end{minipage}\label{width-histogram-high-weight}
}
\hfill
\subfigure[]{
\begin{minipage}[b]{0.2\textwidth}
\includegraphics[width=1.6\textwidth]{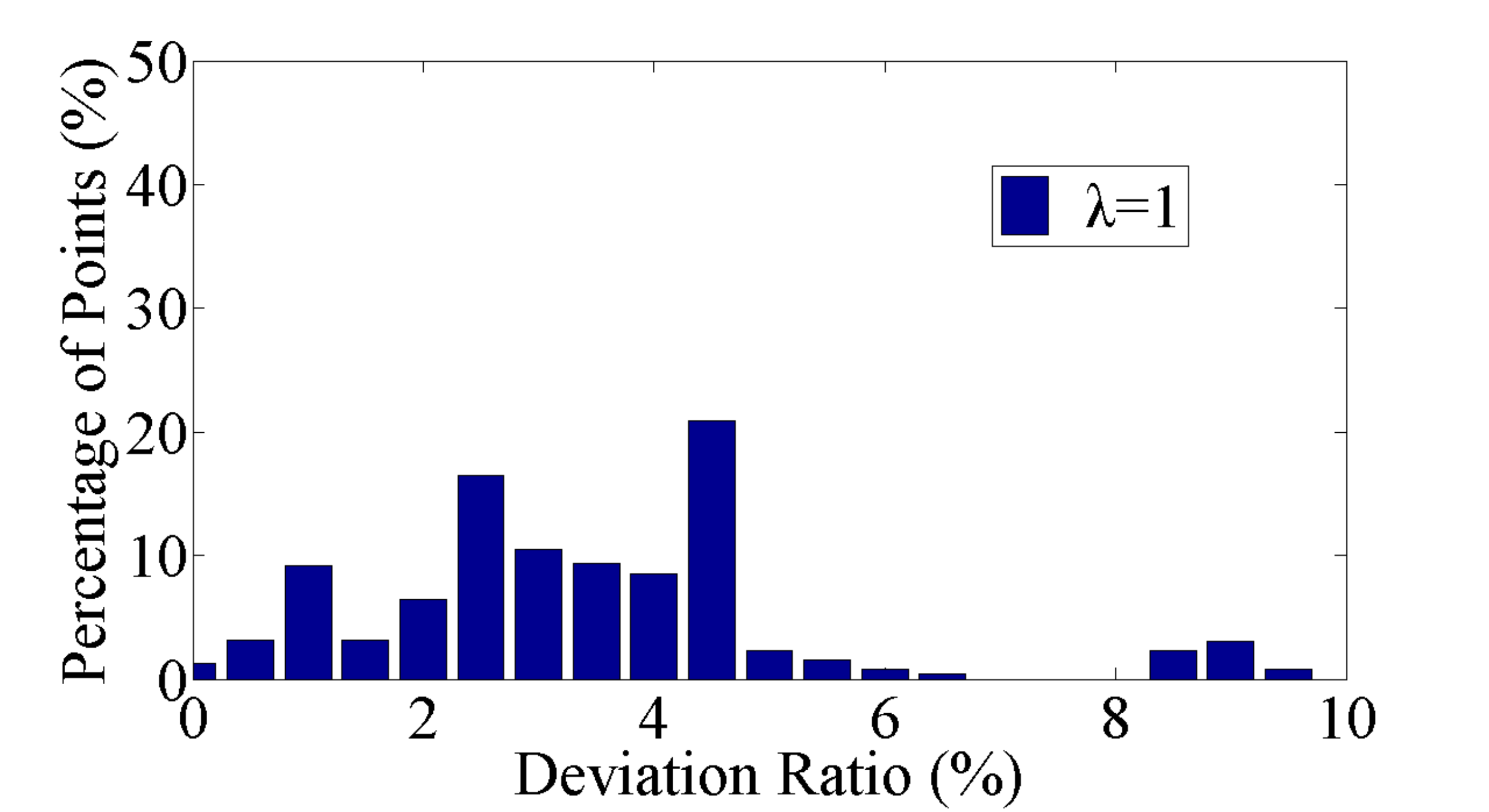}
\end{minipage}\label{width-histogram-mid-weight}
}
\hfill
\subfigure[]{
\begin{minipage}[b]{0.2\textwidth}
\includegraphics[width=1.6\textwidth]{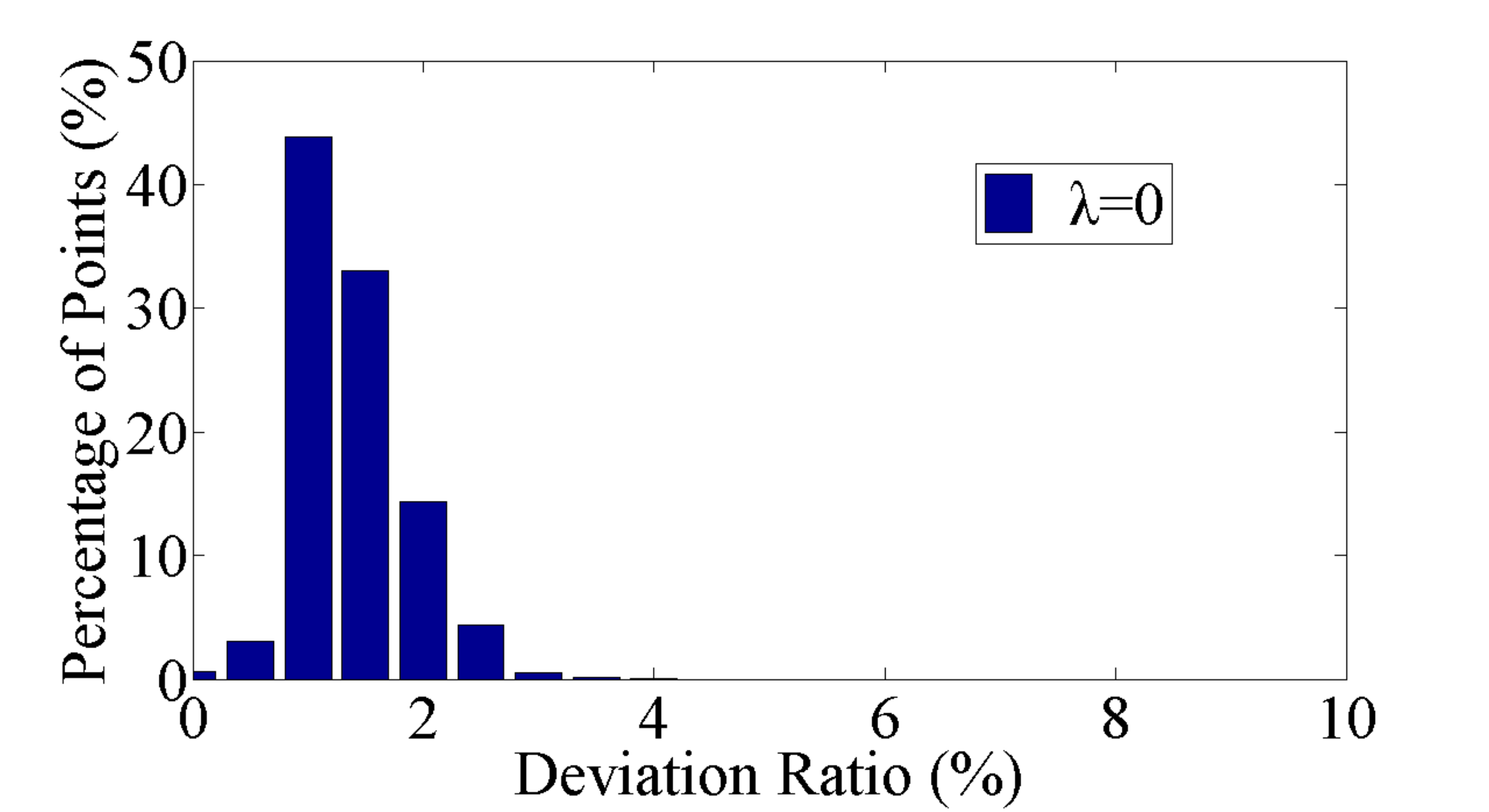}
\end{minipage}\label{width-histogram-low-weight}
}
\hfill
\hspace*{\fill} \\

\hfill
\subfigure[]{
\begin{minipage}[b]{0.2\textwidth}
\includegraphics[width=1.6\textwidth, origin=bl]{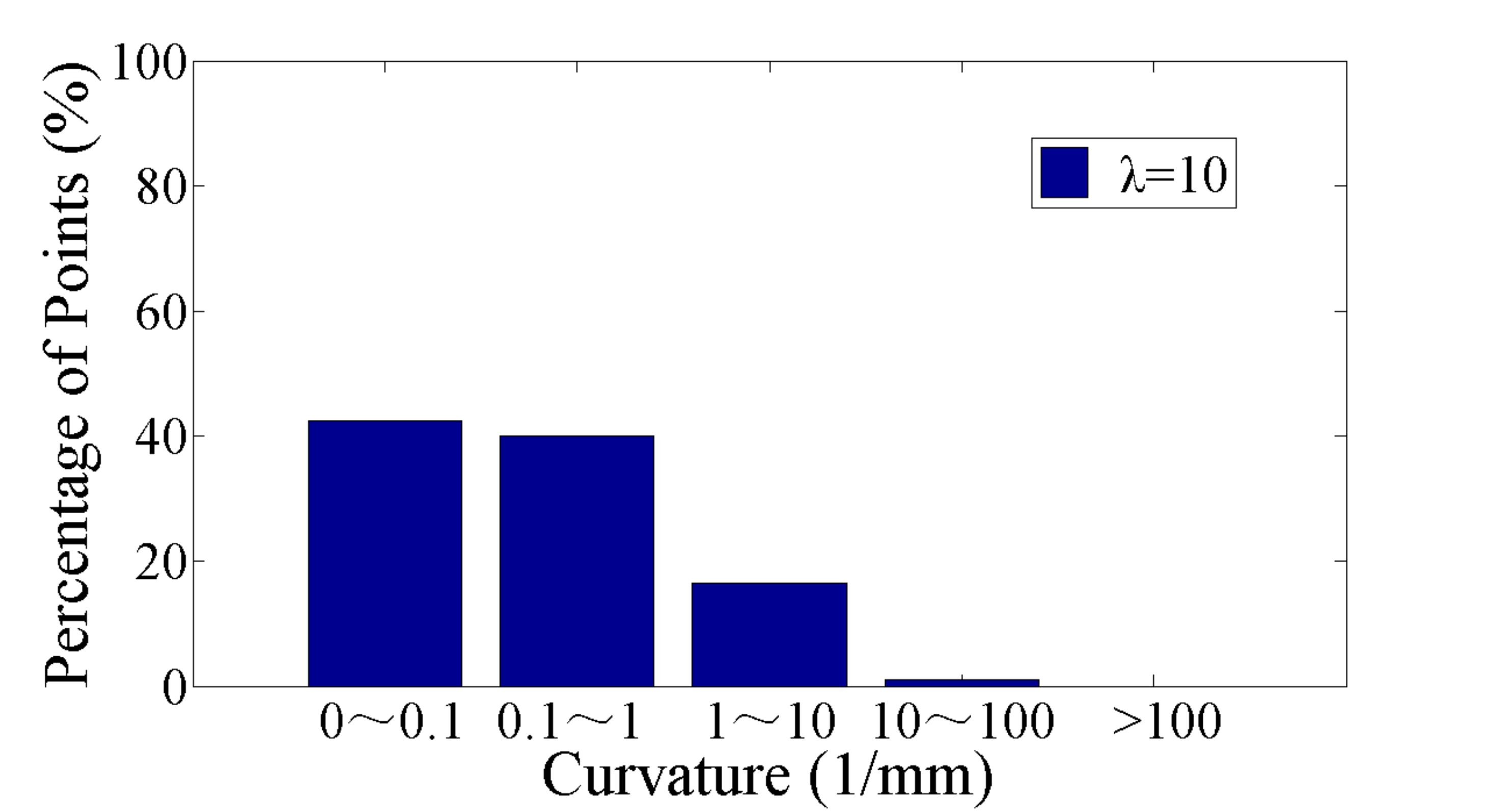}
\end{minipage}\label{fig:curvature-histogram-high-weight}
}
\hfill
\subfigure[]{
\begin{minipage}[b]{0.2\textwidth}
\includegraphics[width=1.6\textwidth]{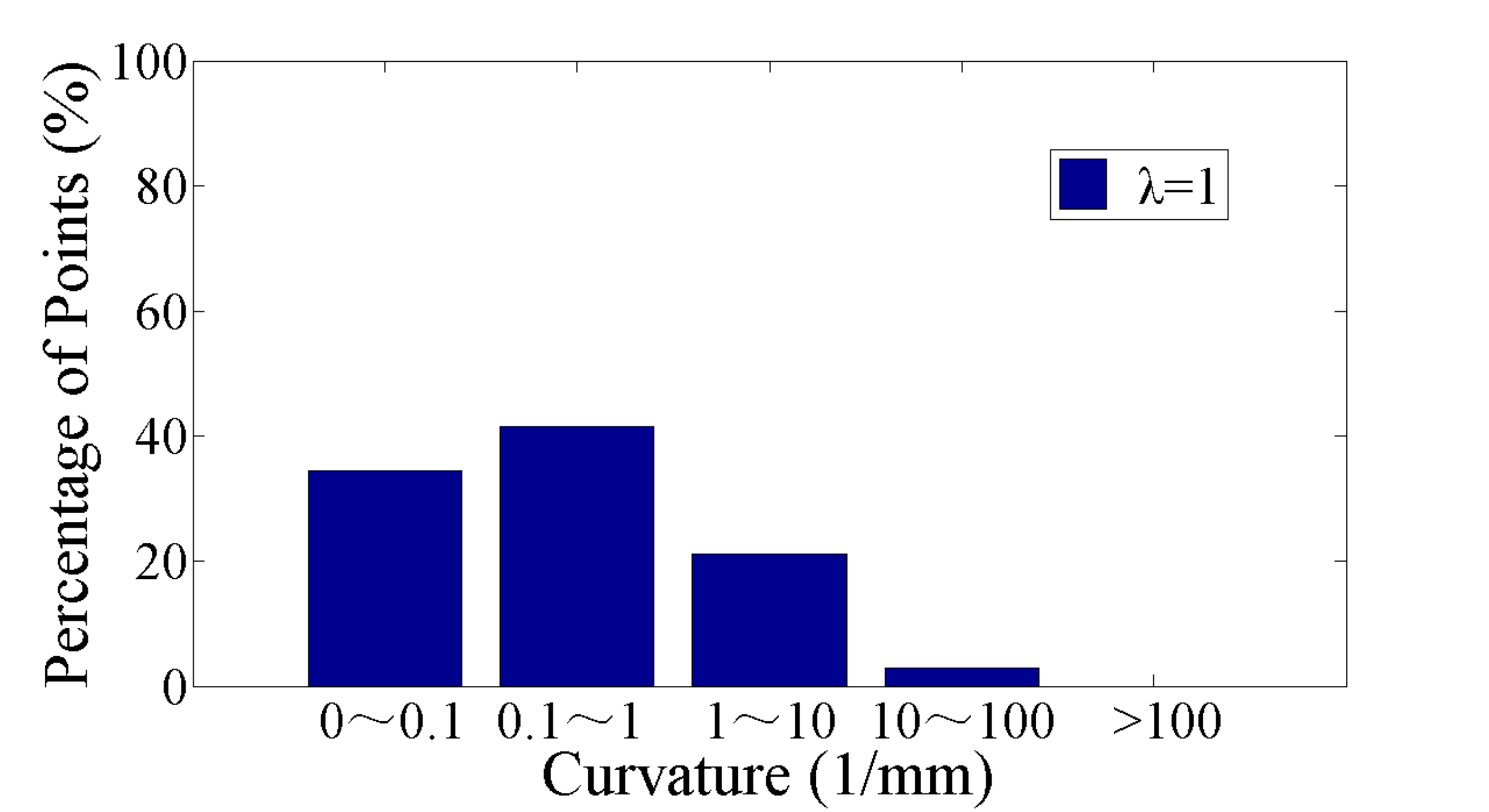}
\end{minipage}\label{curvature-fig:histogram-mid-weight}
}
\hfill
\subfigure[]{
\begin{minipage}[b]{0.2\textwidth}
\includegraphics[width=1.6\textwidth]{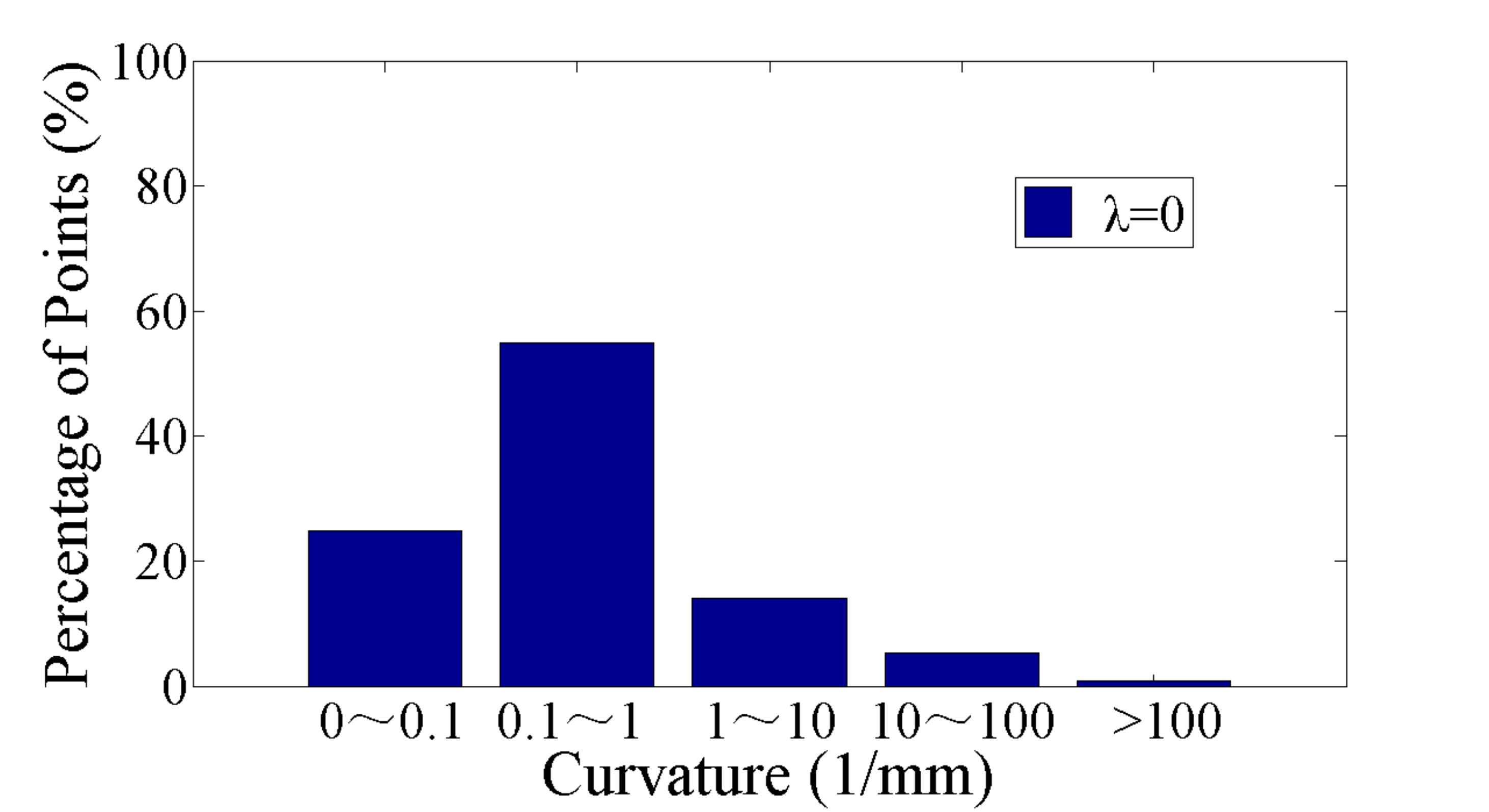}
\end{minipage}\label{curvature-fig:histogram-low-weight}
}
\hfill
\hspace*{\fill}
\caption{Tool paths from smooth to iso-scallop and their analyses. (a) smooth tool path; (b) optimal tool path; (c) iso-scallop tool path; (d) overlapping analysis for smooth tool path; (e) overlapping analysis for optimal tool path; (f) overlapping analysis for iso-scallop tool path; (g) curvature analysis for smooth tool path; (h) curvature analysis for optimal tool path; (i) curvature analysis for iso-scallop tool path.} \label{fig:From-smooth-isoscallop}
\end{figure*}

\begin{figure*}[!htb]
\centering
\hfill
\subfigure[ ]{
\begin{minipage}[b]{0.2\textwidth}
\includegraphics[width=1.2\textwidth, origin=bl]{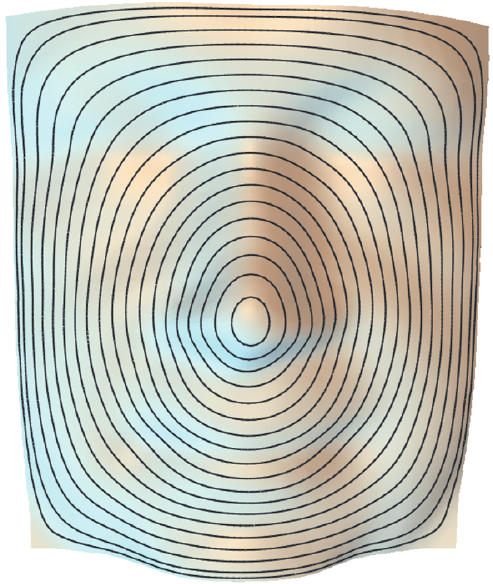}
\end{minipage}\label{fig:laplace-path}
}
\hfill
\subfigure[ ]{
\begin{minipage}[b]{0.2\textwidth}
\includegraphics[width=1.2\textwidth]{face-smooth-1.pdf}
\end{minipage}\label{fig:optimal-path}
}
\hspace*{\fill} \\
\hfill
\subfigure[]{
\begin{minipage}[b]{0.2\textwidth}
\includegraphics[width=2\textwidth, origin=bl]{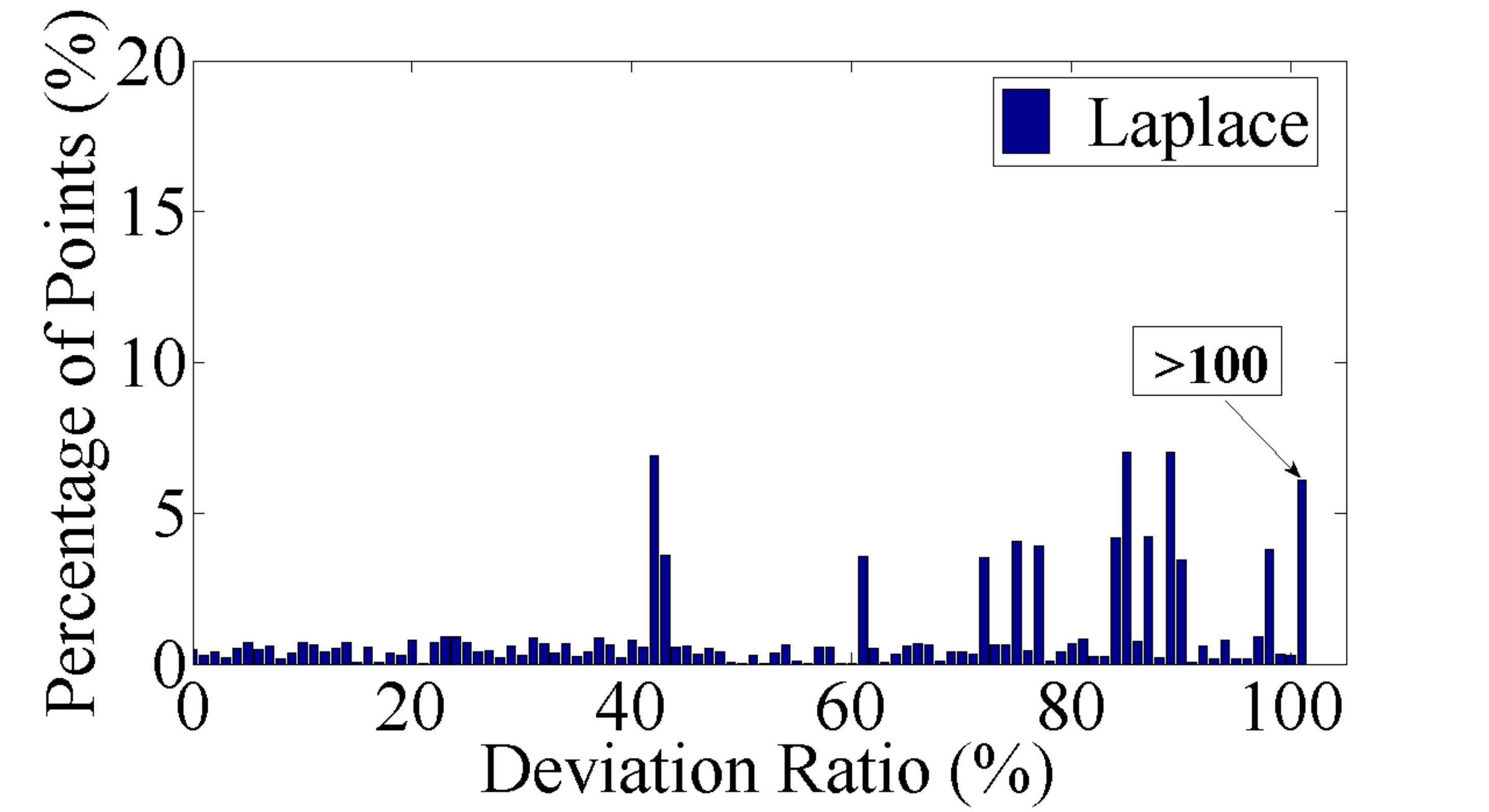}
\end{minipage}\label{fig:width-histogram-laplace}
}
\hfill
\subfigure[]{
\begin{minipage}[b]{0.2\textwidth}
\includegraphics[width=2\textwidth]{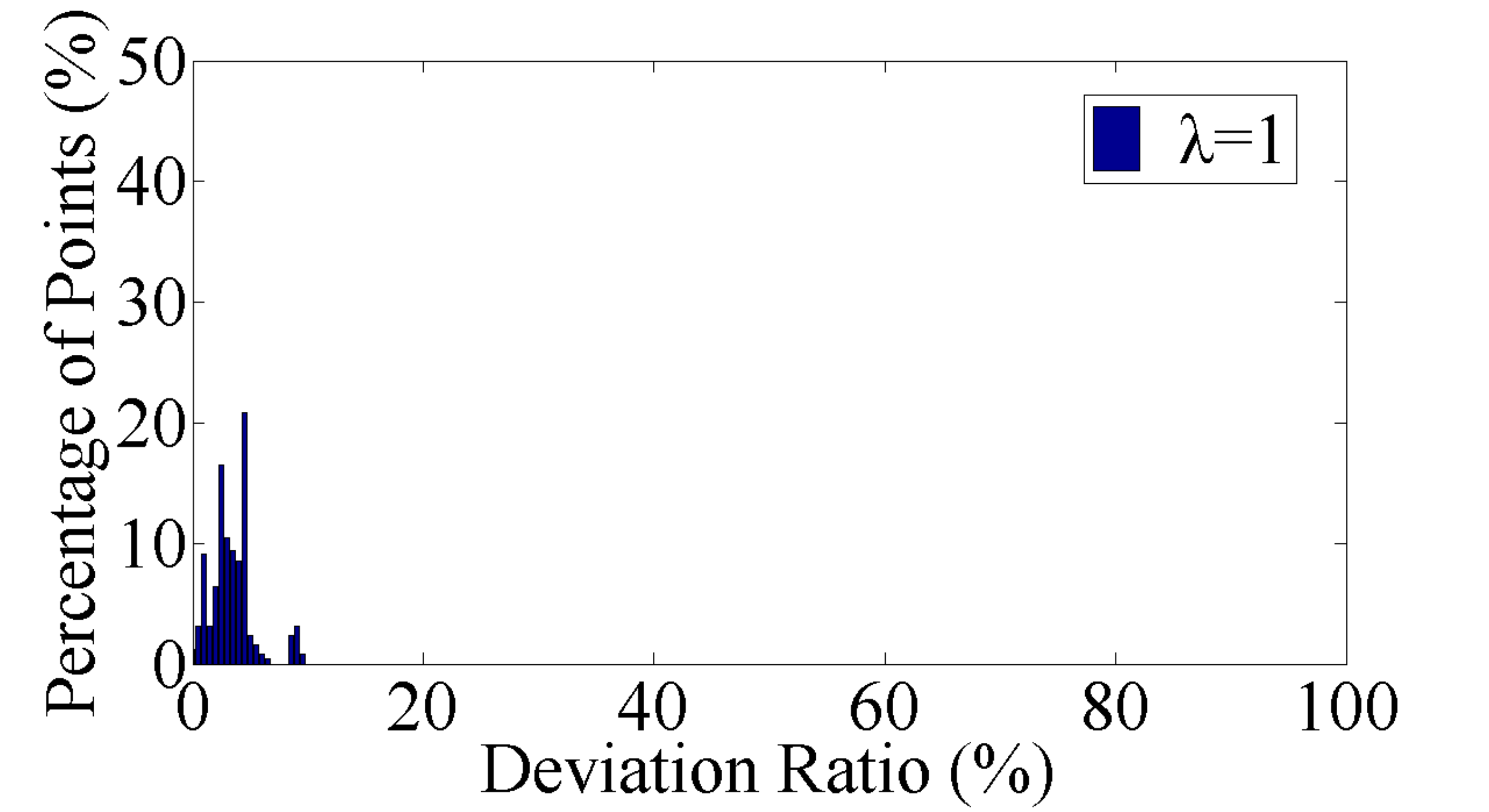}
\end{minipage}\label{fig:width-histogram-optimal}
}
\hfill
\hspace*{\fill}
\caption{Comparison of Laplacian based tool path and the optimal tool path. (a) Laplacian based tool path; (b) optimal tool path; (c) overlapping analysis for Laplacian based tool path; (d) overlapping analysis for optimal tool path.} \label{fig:comparison-laplace}
\end{figure*}




\section{Conclusion}
\label{sec:6}
In this paper, a new framework of tool path planning is proposed. The novelty of our method is that it allows several objectives to be considered in a unified framework and thus making global optimization of tool paths possible. Moreover, the scalar function only has to be constructed once, then it can be utilized to generate tool paths for machining from rough to fine. The proposed framework is applied to find an optimal tool path that takes smoothness and iso-scallop requirements into consideration simultaneously. Equ.~\eqref{equ:5} for controlling interval between neighbor iso-level curves and Equ.~\eqref{equ:smooth-term} for measuring curvature of an iso-level curve are derived to lay a foundation for the formulation of optimization models.

It is likely that this theory has further potential in planning other optimal tool path, and the derived formulas can also be directly applied to level set based tool path planning methods, e.g., \cite{dhanik2010contour}.


\section{Acknowledgement}
\label{sec:7}
The authors are grateful for the support provided by National Key Basic Research Project of China (No.~2011CB302400), National Natural Science Foundation of China (No.~61303148, 50975495), Ph.D. Programs Foundation of Ministry of Education of China (No.~20133402120002), and Swiss National Science Foundation (No. 200021\_137626).




\bibliographystyle{elsarticle-num}
\bibliography{CADbibliography}



%
%
%
\end{document}